\newtheorem{assumption}{Assumption}
\begin{document}

\title{Flexible Attribute-Based Encryption Applicable to \\Secure E-Healthcare Records
%\thanks{Grants or other notes
%about the article that should go on the front page should be
%placed here. General acknowledgments should be placed at the end of the article.}
}
%\subtitle{Do you have a subtitle?\\ If so, write it here}

%\titlerunning{Short form of title}        % if too long for running head

\author{Bo Qin  \and
        Hua Deng \and
        Qianhong Wu$^*$\and
        Josep Domingo-Ferrer \and
        David Naccache \and
        Yunya Zhou %\etc.
}

%\authorrunning{Short form of author list} % if too long for running head

\institute{Bo Qin \at
              Renmin University of China\\
              No. 59, Zhongguangun Street, Haidian District, Beijing, China
              %  \\%             \emph{Present address:} of F. Author  %  if needed
           \and
           Hua Deng \at
           School of Computer, Wuhan University, Wuhan, China
           \and
           Qianhong Wu, Corresponding author \at
           School of Electronic and Information Engineering, Beihang University\\
           XueYuan Road No.37, Haidian District, Beijing, China\\
           Tel.: 0086 10 8233 9469\\
           \email{qianhong.wu@buaa.edu.cn}
           \and
%JOSEP2. Completed address
           Josep Domingo-Ferrer \at
           Universitat Rovira i Virgili, Department of Computer Engineering and Mathematics\\
           UNESCO Chair in Data Privacy, E-43007 Tarragona, Catalonia
           \and
           David Naccache \at
           \'{E}cole normale sup\'{e}rieure, D\'{e}partement d'informatique\\
           45 rue d'Ulm, F-75230, Paris Cedex 05, France
           \and
           Yunya Zhou \at
           School of Electronic and Information Engineering, Beihang University\\
           XueYuan Road No.37, Haidian District, Beijing, China
           }

\date{Received: date / Accepted: date}
% The correct dates will be entered by the editor

\maketitle

\begin{abstract}
In e-healthcare record systems (EHRS), attribute-based encryption (ABE)
appears as a natural way to achieve fine-grained access control on health records. Some proposals exploit key-policy ABE (KP-ABE) to protect privacy in
such a way that all users are associated with specific access policies and only the ciphertexts matching the users' access policies can be decrypted.
%JOSEP. Rewritten below.
An issue with KP-ABE is that it requires an {\em a priori} formulation
of access policies during key generation, which is not always practicable in EHRS because the policies to access health records are sometimes determined after key generation. In this paper, we  revisit KP-ABE and propose a {\em dynamic} ABE paradigm, referred to as access policy redefinable ABE (APR-ABE). To address the above issue,  APR-ABE allows users to redefine their access policies and delegate keys for the redefined ones; hence {\em a priori} precise policies are no longer mandatory. We construct an APR-ABE scheme with short ciphertexts and prove its full security in the standard model under several static assumptions.

\keywords{E-Healthcare records \and Privacy \and Access control \and Attribute-based encryption}
% \PACS{PACS code1 \and PACS code2 \and more}
% \subclass{MSC code1 \and MSC code2 \and more}
\end{abstract}

\section{Introduction}
\label{intro}

%JOSEP. Slightly rewritten.
Attribute-based encryption (ABE) provides fine-grained access control
over encrypted data by using access policies and attributes embedded in secret keys and
ciphertexts. ABE cryptostems \cite{SW05} fall into two categories:
key-policy ABE (KP-ABE) \cite{GPS+06} systems and ciphertext-policy ABE (CP-ABE)
\cite{BSW07} systems. In a CP-ABE system, the users' secret keys are associated with sets of attributes, and a sender generates a ciphertext with an access policy specifying the attributes that the decryptors must have. Alternatively, in a KP-ABE system, the users' secret keys are labeled with access policies and the sender specifies a set of attributes; only the users whose access policies match the attribute set can decrypt.

%JOSEP2. which is -> which are
ABE requires \emph{a priori} access policies, which are not always available. This may limit its applications in practice. The following scenario
illustrates our point.

In an e-healthcare record system (EHRS),  Alice's health records are encrypted by the doctors whom she consulted before. When Alice authorizes some doctors to access her encrypted medical records, she may have no sufficient expertise to precisely determine which doctors should access the records. Instead, according to her
experience and common sense, she may specify a policy saying that the doctor
ought to be medicine professor with five-year working experience from the
hospitals she knows. After a matching doctor Bob sees Alice's medical materials, Bob finds that Alice has something wrong with her heart. Hence, a
cardiologist's advice must be sought; thus, a cardiologist (who can be a professor or not) must be allowed to see Alice's documents.

In this application, the main obstacle to apply ABE is that Alice, serving as the key generation authority, cannot generate secret keys for access policies that are {\em a priori} 
%JOSEP2. to cast means to shape in a mold. For stone, "to carve" is better.
``carv\-ed in stone'', because she does not clearly know which experts are necessary for her diagnosis.
%Also, it is impractical to let Alice generate keys for all doctors who potentially need to read her medical records.

%JOSEP. Slightly rewritten below.
In fact, the access policy must be dynamically modified. That is, authorized users must be able to redefine their access policies and then delegate secret keys for the redefined access policies to other users. For instance, in the above motivating scenario, Alice first authorizes doctors with some general attributes to access her encrypted medical records. After the matching doctor makes a preliminary diagnosis and finds something wrong with
Alice's heart, the doctor redefines his access policy to involve some special attributes (e.g. specialty: cardiologist) and delegates to the doctor with the redefined access policy. In this way,  {\em a priori} precise access policies are not mandatory during key generation because
they can be later redefined in delegation.

There are already some ABE schemes supporting delegation. The CP-ABE schemes in \cite{BSW07,GJP+08,Wat11} allow users to delegate 
%JOSEP2. Highlighted here what restrictiveness means.
{\em more restricted secret keys, that is, keys for attribute sets that are subsets of the original ones.} 
In KP-ABE, the schemes proposed in \cite{GPS+06,LW11,BNS13,RW13} provide a delegation mechanism, but all of them require that the access policy to be delegated be more restrictive. %than the one of the delegating key
%JOSEP. IMPORTANT. Does this mean that new attributes can be introduced
%in the delegated policy? If yes, the delegated policy would not be
%limited?
%if new attributes are introduced into the policy.
This {\em limited} delegation functionality is
%JOSEP. undesirable -> insufficient
often insufficient: for example, in the motivating
application above 
%JOSEP2. IMPORTANT. Changed
%a user should be allowed
%to delegate keys 
%for others just because they are the user's subordinates.
Bob should be able to delegate to a cardiologist even if Bob
is not a cardiologist himself.
Limiting the user to delegating keys for other users associated
with more restrictive access policies
%JOSEP. IMPORTANT. I have tried to explain what "more restrictive" means.
%Please check if parenthetical remark is OK.
%JOSEP2. deleted parenthesis below.
%(i.e. more restrictive logic expressions)
is too rigid.

%JOSEP. Slightly rewritten below.
The challenge of providing appropriate delegation for the applications above
has to do with the underlying secret sharing scheme. In most KP-ABE schemes (\cite{GPS+06,LW11,RW13}), secret sharing schemes are employed to share a secret in key generation and reconstruct the secret during decryption. In the key generation, each attribute involved in the access policy needs to be associated with a secret share. If there are new attributes in the target access policy to be delegated to, users cannot delegate a secret key for the access policy since they are unable to generate shares for the new attributes without knowing
the secret. This is why the above mentioned KP-ABE schemes require the 
%JOSEP2. Changed redefined by delegated. We better reserve "redefined"
%for our proposal.
delegated access policy to be more restrictive than the original one. This hinders
applying them for the motivating application, where the doctor with general attributes would like to delegate his access rights to a 
doctor associated with
%JOSEP2. Added "new"
 new special attributes.

\subsection{Our Work}
\label{contri}
We propose a dynamic primitive referred to as access policy redefinable ABE (APR-ABE). The functional goal of APR-ABE is to provide a more dynamic delegation mechanism. In an APR-ABE system, users can play the role of the key generation authority by delegating secret keys to their subordinates. The delegation does not require the redefined access policy to
be more restrictive than the one of the delegating key.

%JOSEP. Rewritten below.
Noting that attributes are very
often hierarchically related in the real word,
we arrange the attribute universe of APR-ABE in a matrix.
For example, we can place the attribute ``Internal medicine''
at a higher level of the matrix than the attribute ``Cardiologist''.
Due to this arrangement, the notion of {\em attribute vector}
naturally comes up: an attribute vector can be generated by
picking single attributes from upper levels to lower levels.
By using attribute vectors, we can realize a delegation
that allows new attributes to be added into the original access policy
and a secret key to be delegated for the resulting policy.
This delegation is similar to the one of hierarchical identity-based
encryption (HIBE,\cite{BBG05}), but with the difference that
only delegation to the attributes consistent
with the attribute matrix is allowed.

We present an APR-ABE framework based on KP-ABE and define its
full security. In  APR-ABE, the users' secret keys are associated with
an access structure formalized by attribute vectors. Users
at higher levels can redefine their access structures and then
delegate secret keys to others in lower levels without the constraint that the redefined access structures of the delegated keys be more restrictive.
Ciphertexts are generated with sets of attribute vectors, and decryption succeeds if and only if the attribute set of a ciphertext satisfies the access structure associated with a secret key, just as in the ordinary KP-ABE.
In full security, a strong security notion in ABE systems, an adversary is allowed to access public keys, create attribute vectors and query secret keys for specified access structures. Full security states that
%JOSEP. Rewritten below.
not even such an adversary can get any useful information about the plaintext encrypted in a ciphertext, provided that he does not have the correct decryption key.

We construct an  APR-ABE scheme by employing a linear secret sharing scheme (LSSS). An LSSS satisfies linearity, that is, new shares generated by multiplying existing shares by random factors can still reconstruct the secret. Hence, when delegating to new attributes, we create new attribute vectors by combining new attributes with existing attribute vectors and we generate shares for new attribute vectors by randomizing the shares of the existing vectors. In this way, all attribute vectors in the redefined access structure will obtain functional shares and the access structure need not to be more restrictive than the one
of the delegating key. One may attempt to trivially construct APR-ABE from HIBE by directly setting each attribute vector as the identity vector in HIBE. However, this trivial construction would suffer from collusion attacks because a coalition of users may collude to decrypt ciphertexts sent to none of them, 
%JOSEP2. Rewritten below.
even though the access structure of
none of the colluders matches the attribute sets of the concerned ciphertexts.
The proposed APR-ABE scheme withstands this kind of collusion attack by associating random values to the secret keys of users. The proposed  APR-ABE scheme has short ciphertexts and is proven to be fully secure in the standard model under several static assumptions.

APR-ABE can provide an efficient solution to the motivating application.
%JOSEP. Rewritten below.
%JOSEP2. replaced professional by "more specific, professional"
General attributes can be placed
in the first level and more specific, professional attributes in the next level. Alice authorizes doctors to access her medical records by specifying access policies in terms of general attributes. These authorized doctors can
redefine their access policies in terms of professional attributes
and they can delegate keys to other doctors. The matching doctors then can read Alice's records if their general and specific 
professional attributes match those
specified by the doctors who encrypted Alice's health records.

%The KP-HABE primitive can provide an efficient solution to the motivating applications. First, due to the delegation, the key-generation authority is unnecessary to generate secret keys for all users, which alleviates the burden of key generation of the authority and prevents it from knowing all users' attributes. Second, the KP-HABE can be employed for the electronic health record system by elegantly designing attribute hierarchies. We can arrange general attribute such as hospital name, gender, working years in the first level and professional attributes of doctors in the next level. Then, Alice can authorize doctors to access her health records by associating them with an access policy regarding the general attributes. When these doctors decide which professional advices should be sought for Alice's records, they can refine the access policy specified by Alice to be a professional one and delegate keys to the matching doctors. The matching doctors then can read Alice's records if their general and professional attributes match those of the doctors who encrypted Alice's health records.

\subsection{Applying  APR-ABE to EHR Systems}
\label{application}

%JOSEP2. rather than "solve the issue", APR-ABE circumvents the issue. 
Our APR-ABE can be applied to EHR systems to circumvent the issue of {\em a priori} formulation of access policies. 
%JOSEP2. I think the following lines are now redundant with the previous
%section. I suppressed them.
%That is,  Alice, serving as the key generation authority, cannot specify {\em a priori} access policies for all doctors since she does not know which professional medical diagnosis will be sought in the future.
%JOSEP2. Rewritten.
The APR-ABE solution relies on cleverly designed attribute hierarchies. We can arrange the attribute universe in a matrix such that general attributes like
%JOSEP2. Rewritten next lines and merged sentences.
hospital name (for example ``Hospital A'', ``Hospital B''), 
title (for example ``Professor'') 
or working years are placed in the first level, while 
specific professional attributes of doctors
(typically their medical specialty, with values like 
 ``Cardiologist'',  ``Gastroenterologist'', etc.)
 are placed in the next level. When delegating, doctors matching general attributes can redefine their access policies 
in terms of professional attributes.
%JOSEP2. Suppressed this.
% For instance, we can arrange the general attributes ``Hospital A'', ``Hospital B'', ``Gender'' and ``Working years'' in the first level and professional attributes including ``Cardiologist'', ``Gastroenterologist'' in the second level. 
We now describe how does APR-ABE work for such setting in an EHR system.

As depicted in Fig. \ref{application}, an EHR system employs a health record repository to store patients' health records. To protect privacy, all health records are encrypted by doctors who make diagnoses. Suppose that Alice's health records are encrypted with an 
%JOSEP2. After exchanges with Hua Deng, now it's OK below.
attribute set $S=$\{Hospital A, Cardiologist, Professor, Working years$\geq3$\}. When Alice feels sick, she wants to authorize some doctors to read her health records. However, she may not know what exact experts are necessary for her diagnosis. Instead of generating secret keys for all doctors of Hospital A, Alice specifies an access policy $\mathbb{A}=$\{Hospital A AND Professor AND Working years$\geq5$\} and generates a secret key $SK_\mathbb{A}$ for a doctor matching this access policy. The matching doctor then makes a preliminary diagnosis on Alice's health records. Upon finding that Alice has
a heart condition,  the doctor redefines the access policy $\mathbb{A}$
%JOSEP. Slightly rewritten.
to seek greater specialization, $\mathbb{A}'=$\{\{Hospital A, Cardiologist\} AND Professor AND Working years$\geq7$\} and delegates a secret key for $\mathbb{A}'$. Since the set $S$ associated with Alice's health records satisfies access structure $\mathbb{A}'$, the doctor with $\mathbb{A}'$ can decrypt and read Alice's health records. 
%JOSEP2. Rewritten below.
We note that the pair 
of attributes \{Hospital A, Cardiologist\} 
that appears in $\mathbb{A}'$ is treated as an attribute vector in our APR-ABE.
Thus in the 
%JOSEP2. Changed wording: delegation -> redefinition
redefinition of $\mathbb{A}$ as $\mathbb{A}'$, the new attribute ``Cardiologist'' can be added, that is, the delegation is {\em not} more restrictive.

\begin{figure}
% Use the relevant command to insert your figure file.
% For example, with the graphicx package use
%\centering
%JOSEP2. Figure as redone after exchange with Hua Deng, using
% $\mathbb{A}'=$\{\{Hospital A, Cardiologist\} AND Professor AND Working years$\geq7$\}
  \includegraphics[width=0.50\textwidth]{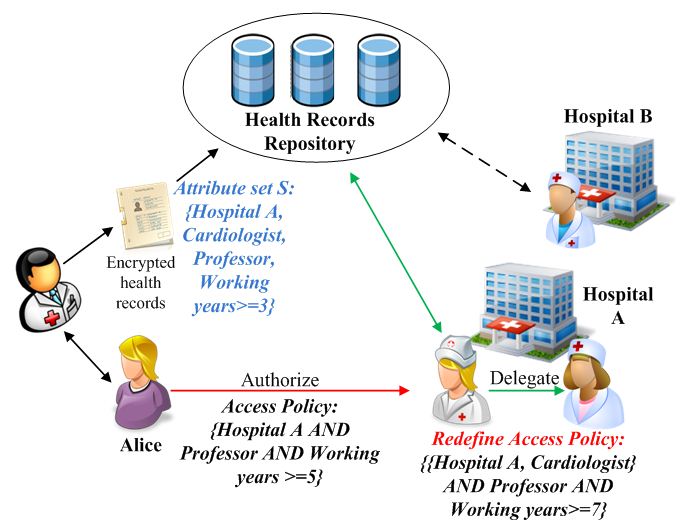}
% figure caption is below the figure
%JOSEP. IMPORTANT. "Working year" in this figure should be
%"Working years" (twice), but I cannot change it.
\caption{Application to EHR systems}
\label{application}       % Give a unique label
\end{figure}

\subsection{Paper Organization}

The rest of this paper is organized as follows. We recall the related work in Section \ref{related}. Section \ref{background} reviews the necessary background for our APR-ABE construction. We formalize the APR-ABE and define its security in Section \ref{model}. Section \ref{scheme} proposes an APR-ABE and proves its security in the standard model. Finally, we conclude the paper in Section \ref{conclusion}.

\section{Related Work}
\label{related}
%Identity-based encryption (IBE) was first proposed by Shamir \cite{Sha84}. In an IBE system, the public key of any user can be an arbitrary string such as an e-mail address or a telephone number. The first functional IBE scheme was presented and proven secure in the random oracle model in  \cite{BF01}. The scheme
%in \cite{BB04} achieves selective security without using random oracles. Waters \cite{Wat05} constructed fully secure IBE scheme in the standard model.
%
%Hierarchical attribute-based encryption (HIBE, \cite{GS02}) extends IBE with key delegation. Compared to IBE systems, the public key of a user in HIBE systems is an identity vector whose dimension is the depth where a user is in. A user at a higher level can delegate keys to his subordinates. Boneh et al. proposed an HIBE scheme \cite{BBG05} which is selectively secure and has constant size ciphertext. Waters \cite{Wat09} proposed fully secure HIBE scheme under simple assumptions by adopting the dual system encryption methodology. Lewko and Waters \cite{LW10} extended this methodology to the HIBE scheme of \cite{BBG05} and obtained a fully secure HIBE scheme with constant size ciphertext. Abdalla et al. \cite{AKN07} proposed an HIBE with wildcard key derivation to allow more general key delegation patterns.

ABE is a versatile cryptographic primitive allowing fine-grained access control over encrypted files.  ABE was introduced by Sahai and Waters
\cite{SW05}. Goyal {\it et al.} \cite{GPS+06} formulated two complementary forms of ABE, i.e., Key-Policy ABE and Ciphertext-Policy ABE, and presented the first
KP-ABE scheme. The first CP-ABE scheme was proposed by Bethencourt {\it et al.} in \cite{BSW07}, although its security proof relies on generic bilinear group
model. Ostrovsky {\it et al.} \cite{OSW07} developed a KP-ABE scheme to handle any non-monotone structure; hence, negated clauses can be included in the
policies. Waters \cite{Wat11} presented a CP-ABE construction that allows any attribute access structure to be expressed by a Linear Secret Sharing Scheme
(LSSS). Attrapadung {\it et al.} \cite{ALP11} gave a KP-ABE scheme
permitting non-monotone access structures and constant-size ciphertexts.
To reduce
decryption time, Hohenberger and Waters \cite{HW13} presented a KP-ABE with fast decryption.
%Lewko and Waters \cite{LW11} proposed an unbounded ABE scheme in the sense that the size of universe attributes has not to be fixed at the time of system setup.

The flexible encryption property of ABE made it widely adopted in e-healthcare record systems. Li {\it et al.} \cite{LYZ+13} leveraged ABE to encrypt personal health records in cloud computing and exploited multi-authority ABE to achieve a high degree of privacy of records. Yu {\it et al.} \cite{YRL11} adopted and tailored ABE for wireless sensors of e-healthcare systems. Liang {\it et al.} \cite{LBL+12} also applied ABE to secure private health records in health social networks. In their solution, users can verify
%JOSEP2. each other users' -> each other's
 each other's identifiers  without seeing sensitive attributes, which yields a high level of privacy. Noting that the application of KP-ABE to distributed sensors in e-healthcare systems introduces several challenges regarding attribute and user revocation, Hur \cite{Hur11} proposed an access control scheme using KP-ABE that has efficient attribute and user revocation capabilities.

In order to allow delegation of access rights to encrypted data, some ABE schemes support certain key delegation. CP-ABE \cite{BSW07,GJP+08,Wat11} allow users to delegate to attribute sets that are subsets of the original ones. Since a secret sharing scheme is used in key generation, the delegation of KP-ABE is more complicated. Goyal {\it et al.} \cite{GPS+06} adopted Lagrange interpolation to realize secret sharing and achieved a KP-ABE with selective security. This scheme supports key delegation while requiring the tree structures of delegated keys to be more restrictive than the one of the delegating key when new attributes are introduced. Lewko and Waters \cite{LW11} presented a fully secure KP-ABE  which employs a more general LSSS matrix to realize secret sharing. This KP-ABE  allows key delegation while requiring the redefined access policy to be either equivalent to the original access policy or more restrictive when new attributes need to be added. The KP-ABE in \cite{RW13} is an improvement of Lewko and Waters' KP-ABE and inherits its delegation, which
is hence limited as well. Recently, Boneh  {\it et al.} \cite{BNS13} proposed an ABE where access policies are expressed as polynomial-size arithmetic circuits. Their system supports key delegation but the size of the secret keys
increases quadratically with the number of delegations.

There are some works resolving delegation in different applications. To achieve both fine-grained access control and high performance for enterprise users, Wang {\it et al.} \cite{WLW10} proposed a solution that combines
hierarchical identity-based encryption with CP-ABE to allow a performance-expressivity tradeoff. In that scheme, various authorities rather than attributes are hierarchically organized in order to generate keys for users in their domains. Wan  {\it et al.} \cite{WLD12} extended ciphertext-policy attribute-set-based encryption with a hierarchical structure of users to achieve scalability and flexibility for access control in cloud computing systems. Li {\it et al.} \cite{LWW+11} enhanced ABE by organizing attributes in a tree-like structure to achieve delegation, which is similar to our arrangement of attributes; however,
 their delegation is still limited to
increasingly restrictive access policies. Besides, the security
of the proposed scheme is only selective.  Indeed,
all these schemes are proposed to adapt ABE for specific applications,
while our APR-ABE aims at permitting users to redefine their
access policies and delegate secret keys
in a way that does not need to be increasingly restrictive.
% without worrying about the restriction on the refined access policies.

\section{Preliminaries}
\label{background}
In this section, we overview access structures, linear secret sharing schemes (LSSS), the composite-order bilinear group equipped with a bilinear map, and several complexity assumptions.

\subsection{Access Structures \cite{Bei96}}
\label{sec21}

\begin{definition}\label{accstr} Let $\{P_1, P_2,\cdots, P_n\}$ be a set of parties. A collection $\mathbb{A}\subseteq 2^{\{P_1,P_2,\cdots,P_n\}}$ is
monotone if for $\forall B, C$, we have that $C \in \mathbb{A}$ holds if $B \in \mathbb{A}$ and $B\subseteq C$. An access structure (respectively,
monotone access structure) is a collection (respectively, monotone collection) $\mathbb{A}$ of non-empty subsets of $\{P_1, P_2, . . . , P_n\}$, i.e.,
$\mathbb{A}\subseteq 2^{\{P_1,P_2,\cdots,P_n\}}\backslash \{\emptyset\}$. The sets in $\mathbb{A}$ are called the authorized sets, and the sets not in
$\mathbb{A}$ are called the unauthorized sets.
\end{definition}

In traditional KP-ABE, the role of the parties is played by the attributes. In our APR-ABE, the role of the parties is taken by
attribute vectors. Then an access structure is a collection of sets of attribute vectors. We restrict our attention to  monotone access structures in our APR-ABE. However we can realize general access structures by having the negation
of an attribute as a separate attribute, at the cost of doubling the number of attributes in the system.

\subsection{Linear Secret Sharing Schemes \cite{Bei96}}

\begin{definition} \label{LSSS}A secret-sharing scheme $\Pi$ over a set of parties $\mathcal{P}$ is called linear (over $\mathbb{Z}_p$) if
\begin{enumerate}
  \item The shares for each party form a vector over $\mathbb{Z}_p$.
  \item There exists a matrix $\mathbf{A}$ called the share-generating matrix for $\Pi$, where $\mathbf{A}$ has $l$ rows and $n$ columns. For all $i =
      1,\cdots, l$, the $i$-th row of $\mathbf{A}$ is labeled by a party $\rho(i)$, where $\rho$ is a function from $\{1,\cdots, l\}$ to $\mathcal{P}$.
      When we consider the column vector $\vec{s}=(s,s_2,\cdots,s_n)$, where $s \in \mathbb{Z}_p$ is the secret to be shared, and $s_2,\cdots,s_n \in
      \mathbb{Z}_p$ are randomly chosen, then $\mathbf{A}\vec{s}$ is the vector of $l$ shares of the secret s according to $\Pi$. Let $A_i$ denote the
      $i$-th row of $\mathbf{A}$, then $\lambda_i=A_i\vec{s}$ is the share belonging to party $\rho(i)$.
\end{enumerate}
\end{definition}

\medskip
\noindent\textbf{Linear Reconstruction.} \cite{Bei96} shows that every LSSS $\Pi$ enjoys the linear reconstruction property. Suppose $\Pi$
is the LSSS for access structure $\mathbb{A}$ and $S$ is an authorized set in $\mathbb{A}$, i.e., $\mathbb{A}$ contains $S$. There exist constants
$\{\omega_i\in\mathbb{Z}_p\}$ which can be found in time polynomial in the size of the share-generating matrix $\mathbf{A}$ such that if $\{\lambda_i\}$
are valid shares of $s$, then $\sum_{i\in I}\omega_i\lambda_i=s$, where $I=\{i: \rho(i)\in S\}\subseteq \{1,\cdots, l\}$.

\subsection{Composite-order Bilinear Groups}

Suppose that $\mathcal{G}$ is a group generator and $\ell$ is a
security parameter. Composite-order
bilinear groups \cite{BGN05} can be defined as
$(N=p_1p_2p_3, \mathbb{G}, \mathbb{G}_T, e)\leftarrow \mathcal{G}(1^\ell)$, where $p_1, p_2$ and $p_3$ are three distinct primes, both $\mathbb{G}$ and
$\mathbb{G}_T$ are cyclic groups of order $N$ and the group operations in both $\mathbb{G}$ and $\mathbb{G}_T$ are computable in time polynomial in
$\ell$. A map $e: \mathbb{G}\times \mathbb{G} \rightarrow \mathbb{G}_T$ is an efficiently computable map with the following properties.
\begin{enumerate}
  \item \textbf{Bilinearity}: for all $a, b\in\mathbb{Z}_N$ and $g, h\in \mathbb{G}$, $e(g^a, h^b)=e(g, h)^{ab}$.
  \item \textbf{Non-degeneracy}: $\exists g\in \mathbb{G}$ such that $e(g, g)$ has order $N$ in $\mathbb{G}_T$.
\end{enumerate}

Let $\mathbb{G}_{ij}$ denote the subgroup of order $p_ip_j$ for $i\neq j$, and $\mathbb{G}_{1}, \mathbb{G}_{2}$, $\mathbb{G}_{3}$ the subgroups of order
$p_1, p_2$, $p_3$ in $\mathbb{G}$, respectively. The \emph{orthogonality} property of $\mathbb{G}_{1}, \mathbb{G}_{2}, \mathbb{G}_{3}$ is defined as:
\begin{definition} For
all $u\in \mathbb{G}_{i}, v\in \mathbb{G}_{j}$, it holds that $e(u, v)=1$, where $i\neq j\in\{1, 2, 3\}$.
\end{definition}
The orthogonality property is essential in our
constructions and security proofs.

\subsection{Complexity Assumptions}
We now list the complexity assumptions which will be used to prove the security of our scheme. These assumptions were introduced
by \cite{LW10} to prove fully secure HIBE and
they were also employed by some ABE schemes
(e.g., \cite{LOS+10,LW11}) to attain full security.

\begin{assumption}
Let $(N=p_1p_2p_3,\mathbb{G},\mathbb{G}_T, e)\overset{R}\leftarrow \mathcal{G}(1^\ell)$. Define a distribution
$$g\overset{R}\leftarrow \mathbb{G}_{1}; X_3\overset{R}\leftarrow \mathbb{G}_{3};~ \mathcal{D}=(\mathbb{G},g,X_3); ~T_1\overset{R}\leftarrow
\mathbb{G}_{1}; ~T_2\overset{R}\leftarrow \mathbb{G}_{12}.$$ The advantage of an algorithm $\mathcal{A}$ in breaking Assumption 1 is defined as
$$\mbox{Adv1}_{\mathcal{A}}(\ell)=|\mbox{Pr}[\mathcal{A}(\mathcal{D},T_1)=1]-
\mbox{Pr}[\mathcal{A}(\mathcal{D},T_2)=1]|.$$
Assumption 1 holds if $\mbox{Adv1}_{\mathcal{A}}(\ell)$ is negligible in $\ell$ for any polynomial-time algorithm $\mathcal{A}$.
\end{assumption}

\begin{assumption}
Let $(N=p_1p_2p_3,\mathbb{G},\mathbb{G}_T, e)\overset{R}\leftarrow \mathcal{G}(1^\ell)$. Define a distribution
$$g, X_1\overset{R}\leftarrow \mathbb{G}_{1}, X_2, Y_2\overset{R}\leftarrow \mathbb{G}_{2}, X_3, Y_3\overset{R}\leftarrow \mathbb{G}_{3},$$
$$\mathcal{D}=(\mathbb{G}, g, X_1X_2, X_3, Y_2Y_3), T_1\overset{R}\leftarrow \mathbb{G}, T_2\overset{R}\leftarrow \mathbb{G}_{13}.$$
The advantage of an algorithm $\mathcal{A}$ in breaking Assumption 2 is defined as
$$\mbox{Adv2}_{\mathcal{A}}(\ell)=|\mbox{Pr}[\mathcal{A}(\mathcal{D},T_1)=1]-\mbox{Pr}[\mathcal{A}(\mathcal{D},T_2)=1]|.$$
Assumption 2 holds if any polynomial-time algorithm $\mathcal{A}$ has $\mbox{Adv2}_{\mathcal{A}}(\ell)$ negligible in $\ell$.
\end{assumption}

\begin{assumption} Let $(N=p_1p_2p_3,\mathbb{G},\mathbb{G}_T, e)\overset{R}\leftarrow \mathcal{G}(1^\ell)$. Define a distribution
$$\alpha,s\overset{R}\leftarrow \mathbb{Z}_N, g\overset{R}\leftarrow \mathbb{G}_{1},X_2, Y_2, Z_2\overset{R}\leftarrow \mathbb{G}_{2},
X_3\overset{R}\leftarrow \mathbb{G}_{3},$$ $$\mathcal{D}=(\mathbb{G}, g, g^\alpha X_2, X_3, g^sY_2, Z_2), T_1=e(g,g)^{\alpha s}, T_2\overset{R}\leftarrow
\mathbb{G}_{T}.$$ The advantage of an algorithm $\mathcal{A}$ in breaking Assumption 3 is defined as
$$\mbox{Adv3}_{\mathcal{A}}(\ell)=|\mbox{Pr}[\mathcal{A}(\mathcal{D},T_1)=1]-\mbox{Pr}[\mathcal{A}(\mathcal{D},T_2)=1]|.$$
Assumption 3 holds if any polynomial-time algorithm $\mathcal{A}$ has $\mbox{Adv3}_{\mathcal{A}}(\ell)$ negligible in $\ell$.
\end{assumption}

\section{Modeling Access Policy Redefinable Attribute-based Encryption}
\label{model}
\subsection{Notations}
We now model the APR-ABE system. First, we introduce some notations used in the description.
Observing that the hierarchical property exists among attributes in the real world, we arrange the APR-ABE attribute universe $\mathbf{U}$ in a matrix with $L$ rows and $D$ columns, that is,
$$\mathbf{U}=(u_{i,j})_{L\times D}=(U_1,\cdots,U_i,\cdots,U_L)^T,$$
where $U_i$ is the $i$-th row of $\mathbf{U}$ and contains $D$ attributes and $\mathbf{M}^T$ denotes the transposition of a matrix $\mathbf{M}$. We note that there may be some {\em empty} attributes in the matrix. In that case, we use a special character ``$\emptyset$'' to denote the empty attributes.

The attribute matrix naturally leads to the notion of attribute vector. We define an attribute vector of depth $k$ ($1\leq k\leq L$) as
$$\vec{u}=(u_1, u_2, ..., u_k),$$
where $u_i\in U_i$ for each $i$ from 1 to $k$. This means that an attribute vector of depth $k$ is formed by sampling single attributes from the first level to the $k$-th level. We note that each attribute $u_i$ actually corresponds to two subscripts $(i,j)$ denoting its position in the attribute matrix, but we drop the second subscript $j$ here to simplify notations.

We next define a set of attribute vectors. Let $S=\{\vec{u}\}$ denote a set of attribute vectors of depth $k$ and $|S|$ denote the set's cardinality.

For an attribute vector $\vec{u}'$ of depth $i$ and another attribute vector $\vec{u}$ of depth $k$, we say that $\vec{u}'$ is a {\em prefix} of $\vec{u}$ if
$\vec{u}=(\vec{u}', u_{i+1}, u_{i+2},..., u_{k}),$
where $1\leq i <k \leq L$.

As in Definition \ref{accstr}, we can define $\mathbb{A}$ as an access structure over attribute vectors of depth $k$ such that $\mathbb{A}$ is a
collection of non-empty subsets of the set of all attribute vectors of depth $k$. If for a set $S$ the condition $S\in\mathbb{A}$ holds, then we
say that $S$ is an authorized set in $\mathbb{A}$ and $S$ satisfies $\mathbb{A}$.

In an APR-ABE system, a secret key associated with an access structure $\mathbb{A}$ 
%JOSEP2. can to decrypt -> can decrypt
can decrypt a ciphertext generated with a set $S$ of attribute vectors if and only if $S\in\mathbb{A}$. A secret key associated with an access structure $\mathbb{A}'$ is allowed to delegate a secret key for an access structure $\mathbb{A}$ if these two access structures satisfy a natural condition. That is, each attribute vector of a set $S'\in\mathbb{A}'$ must be a prefix of an attribute vector in some set $S\in\mathbb{A}$ and all attribute vectors involved in $\mathbb{A}$ have prefixes in $\mathbb{A}'$. This guarantees that the user with access structure $\mathbb{A}'$ can use his existing shares to generate shares  for attribute vectors of authorized sets in $\mathbb{A}$. We note that in the delegation there is no requirement that the redefined access structure $\mathbb{A}$ must be
%JOSEP. Added "more"
more restrictive than the original access structure $\mathbb{A}'$ when new attributes are added. This is because those new attributes can be concatenated to the end of existing attribute vectors of $\mathbb{A}'$ instead of being treated as new separate attributes that need to be assigned to new secret shares.

\subsection{System Model}
An APR-ABE system for message space $\mathbb{M}$ and access structure space $\Gamma$ consists of the following five
polynomial-time algorithms:

\begin{itemize}
  \item $(PK,MSK)\leftarrow$ \textbf{Setup}$(1^\ell)$: The setup algorithm takes no input other than the security parameter $\ell$ and outputs the
      public key $PK$ and a master secret key $MSK$.
  \item $CT \leftarrow$ \textbf{Encrypt}$(M,PK,S)$: The encryption algorithm takes as inputs a message $M$, the public key $PK$ and a set $S$ of
      attribute vectors. It outputs a ciphertext $CT$.
  \item $SK\leftarrow$ \textbf{KeyGen}$(PK, MSK, \mathbb{A})$: The key generation algorithm takes as inputs an access structure $\mathbb{A}$, the master
      secret  key $MSK$ and public key $PK$. It outputs a secret key $SK$ for the access structure $\mathbb{A}$.
  \item $SK\leftarrow$ \textbf{Delegate}$(PK, SK', \mathbb{A})$: The delegation algorithm takes as inputs a public key $PK$, a secret key $SK'$ for an
      access structure $\mathbb{A}'$ and another access structure $\mathbb{A}$. It outputs the secret key $SK$ for $\mathbb{A}$ if and only if
      $\mathbb{A}$ and $\mathbb{A}'$ satisfy the delegation condition.
  \item $M/\bot\leftarrow$ \textbf{Decrypt}$(CT,SK,PK)$: The decryption algorithm takes as inputs a ciphertext $CT$ associated with a set $S$ of
      attribute vectors, a secret key for an access structure $\mathbb{A}$, and the public key $PK$. If $S\in \mathbb{A}$, it outputs $M$;
      otherwise, it outputs a false symbol $\bot$.
\end{itemize}

The correctness property requires that for all sufficiently large $\ell\in \mathbb{N}$, all universe descriptions $\mathbf{U}$, all $(PK,MSK)\leftarrow
\mathbf{Setup}(1^\ell)$, all $\mathbb{A}\in \Gamma$, all $SK\leftarrow \mathbf{KeyGen}(PK,MSK,\mathbb{A})$ or $SK \leftarrow \mathbf{Delegate}(PK,SK',
\mathbb{A})$, all $M\in\mathbb{M}$, all $CT \leftarrow \mathbf{Encrypt}(M,PK,S)$, if $S$ satisfies $\mathbb{A}$, then
$\mathbf{Decrypt}(CT,\\SK,PK)$ outputs $M$.

\subsection{Security}
We now define the full security against chosen access structure and chosen-plaintext attacks in APR-ABE. In practice, malicious users are able to obtain the system public
%JOSEP. Added "key"
key and, additionally, they may collude with other users by querying their secret keys. To capture these realistic attacks, we define an adversary  allowed to access the system public key, create attribute vectors and query secret keys for access structures he specifies. The adversary outputs two equal-length messages and a set of attribute vectors to be challenged. Then the full security states that not even such an adversary can distinguish with non-negligible
advantage the ciphertexts of the two messages
under the challenge set of attribute vectors, provided
that he has not queried the secret keys that can be used to decrypt the challenge ciphertext. Formally, the full security of APR-ABE is defined by a game played between a challenger $\mathcal{C}$ and an adversary $\mathcal{A}$ as follows.
\begin{itemize}
   \item \textbf{Setup}: The challenger $\mathcal{C}$ runs the setup algorithm and gives the public key $PK$ to $\mathcal{A}$.

   \medskip
   \item \textbf{Phase 1}: $\mathcal{A}$ sequentially makes queries $Q_1,...,Q_{q_1}$ to $\mathcal{C}$, where $Q_i$ for $1\leq i\leq q_1$
       is one of the following three types:
   \begin{itemize}
     \item {\sf Create}$(\mathbb{A})$. $\mathcal{A}$ specifies an access structure $\mathbb{A}$. In response, $\mathcal{C}$ creates a
         key for this access structure by calling the key generation algorithm, and places this key in the set $\mathcal{K}$ which is initialized to empty. He only gives $\mathcal{A}$ a reference to this key, not the key itself.
     \item {\sf Delegate}$(\mathbb{A},\mathbb{A}')$. $\mathcal{A}$ specifies a key $SK'$ associated with $\mathbb{A}'$ in the set $\mathcal{K}$ and an access structure  $\mathbb{A}$. If allowed by the delegation algorithm, $\mathcal{C}$ produces a key $SK$ for
         $\mathbb{A}$. He adds $SK$ to the set $\mathcal{K}$ and again gives $\mathcal{A}$ only a reference to it, not the actual key.
     \item {\sf Reveal}$(\mathbb{A})$. $\mathcal{A}$ specifies a key in the set $\mathcal{K}$. $\mathcal{C}$ gives this key to the
         attacker and removes it from the set $\mathcal{K}$.
   \end{itemize}

\medskip
   \item \textbf{Challenge}: $\mathcal{A}$ declares two equal-length messages $M_0$ and $M_1$ and a set $S^*$ of attribute vectors
       with an added restriction that for any revealed key $SK$ for access structure $\mathbb{A}$, $S^*\not\in \mathbb{A}$ and for any new key $SK'$ for access structure
       $\mathbb{A}'$ that can be delegated from a revealed one, $S^*\not\in \mathbb{A}'$. $\mathcal{C}$ then flips a random coin $b \in \{0,1\}$, and
       encrypts $M_b$ under $S^*$, producing $CT^*$. He gives $CT^*$ to $\mathcal{A}$.

       \medskip
   \item \textbf{Phase 2}:  $\mathcal{A}$ sequentially makes queries $Q_{q_1+1},...,Q_q$ to $\mathcal{C}$ just as in Phase 1, with the
       restriction that neither the access structure of any revealed key nor the access structure of any key that can be delegated from a revealed one
%JOSEP2. contains -> contain
       contain $S^*$.

       \medskip
   \item \textbf{Guess}: $\mathcal{A}$ outputs a guess $b' \in \{0,1\}$.
\end{itemize}

The advantage of $\mathcal{A}$ in this game is defined as $$Adv_{\mathcal{A}}^{\mbox{APR-ABE}}=|\Pr[b=b']-1/2|.$$ We note that the model above
is for {\em chosen-plaintext } attacks and one can easily extend this model to handle {\em chosen-ciphertext} attacks by allowing decryption queries in
Phase 1 and  Phase 2.

\begin{definition}\label{KP-Security}
We say that an APR-ABE system is fully secure if all Probabilistic Polynomial-Time (PPT) attackers $\mathcal{A}$ have at most a negligible advantage in the
above game.
\end{definition}

\section{The Access Policy Redefinable Attribute-based Encryption Scheme}
\label{scheme}
In this section, we construct an APR-ABE with short ciphertexts. The proposed scheme is proven to be fully secure in the standard model.

\subsection{Basic Idea}

%JOSEP. Rewritten sentence.
We first introduce the basic idea driving
the construction of the APR-ABE scheme.
We base the scheme on the KP-ABE scheme in \cite{LOS+10} and we exploit the
delegation mechanism used in several HIBE schemes (e.g., \cite{BBG05,LW10}).
The key point of this delegation mechanism is
to hash an identity vector to a group
element, which internally associates the identity vector with a ciphertext or a secret key. When introducing this mechanism into our APR-ABE, which involves multiple attribute vectors in a ciphertext or a secret key, we assign a key component to each attribute vector and randomize every key component  to resist collusion attacks.

On the other hand, LSSS have been widely used
in many ABE schemes \cite{ALP11,LOS+10,LW11,Wat11}. In our APR-ABE scheme,
%JOSEP2. Rewritten.
%used LSSS is to generate 
an LSSS is used to generate
a share for each attribute vector of authorized sets in an access structure. The linear reconstruction property of LSSS guarantees that the shares of all attribute vectors in an authorized set can recover the secret. To realize a delegation not limited to
 more restrictive access policies, we must additionally manage to generate shares for new incoming attributes. However, without knowing the secret, delegators cannot directly generate new shares.
To overcome this problem, we concatenate the new incoming attributes to the end of existing attribute vectors to form new attribute vectors and use the existing shares to generate shares for the new attribute vectors. Specifically, to achieve the access structure control, each share of an attribute vector is blinded in the exponent of a key component. Then, to generate new shares, we lift a key component of an existing attribute vector to the power of a random exponent and define the resulting exponent as the new blinded share for the new attribute vector. Since LSSS satisfies linearity, the randomization of shares can still reconstruct the secret.

To realize the above idea, we slightly extend LSSS to handle attribute vectors. For an access structure $\mathbb{A}$, we generate an $l\times n$
share-generating matrix $\mathbf{A}$ ($l$ is the number of attribute vectors involved in $\mathbb{A}$). The inner product of the
$i$-th row vector of $\mathbf{A}$ and a vector taking the secret as the first coordinate is the share for the $i$-th row. We define an {\em injection} function $\rho$
which maps the $i$-th row of the matrix $\mathbf{A}$ to an attribute vector. Then $(\mathbf{A},\rho)$ is the LSSS for $\mathbb{A}$. The {\em injection}
function means that an attribute vector is associated with at most one row of $\mathbf{A}$.

\subsection{The Proposal}

We are now ready to describe our APR-ABE scheme, which is built from bilinear groups of a composite order $N=p_1p_2p_3$, as defined in Section 2.3. The
ciphertexts are generated in the subgroup $\mathbb{G}_1$. The keys are first generated in $\mathbb{G}_{1}$ and then randomized in $\mathbb{G}_{3}$. The
subgroup $\mathbb{G}_{2}$ is only used to implement semi-functionality in the security proofs.
\begin{itemize}
  \item $(PK,MSK)\leftarrow$ \textbf{ Setup}$(1^\ell)$: Run $(N=p_1p_2p_3,\mathbb{G},\mathbb{G}_T, \\e)\overset{R}\leftarrow \mathcal{G}(1^\ell)$. Let
      $\mathbb{G}_{i}$ denote the subgroup of order $p_i$ for $i=1,2,3$. Choose random generators $g\in \mathbb{G}_1, X_3\in \mathbb{G}_3$. Choose
      random elements $\alpha\in \mathbb{Z}_N,  v_i, h_j \in \mathbb{G}_{1}$ for all $i=1,\cdots,D$ and $j=1,\cdots,L$.
      The public key and the master secret key arei
             $$PK=\left(\mathbf{U},N,g,X_3,v_1,\cdots,v_D, h_1,\cdots,h_L, e(g,g)^\alpha\right), $$
      $MSK=\alpha.$

      \medskip
  \item $CT \leftarrow$ \textbf{Encrypt}$(M,PK,S)$: Encrypt a message $M$ under a set $S$ of attribute vectors of depth $k$. Choose a random $s\in\mathbb{Z}_N$ and compute
       $$C=Me(g,g)^{\alpha s},~ E=g^s.$$
       For each $j$ from 1 to $|S|$, choose a random element $t_j\in\mathbb{Z}_N$. Recall that for each attribute vector $\vec{u}=(u_1,u_2,...,u_k)$ of $S$, the first coordinate $u_1$ actually has two subscripts, denoted by $(1,x)$, representing that $u_1$ is the $x$-th entry of the first row in the attribute matrix. Then, choose $v_x$ corresponding to $x$ from the public key and compute
       $$C_{j,0}=v^s_x\left(h^{u_{1}}_1\cdot \cdot \cdot h^{u_{k}}_k\right)^{st_j}, C_{j,1}=g^{st_j}.$$

      Define the ciphertext (including $S$) as
       $$CT=\left(C, E, \langle C_{j,0}, C_{j,1}\rangle_{j=1,...,|S|}\right).$$

       \medskip
  \item $SK\leftarrow$ \textbf{KeyGen}$(PK,MSK,\mathbb{A})$: The algorithm generates an LSSS $(\mathbf{A},\rho)$ for $\mathbb{A}$, where
      $\mathbf{A}$ is the share-generating matrix with $l$ rows and $n$ columns, and $\rho$ maps each row of $\mathbf{A}$ to an attribute vector of
      depth $k$. Choose $n-1$ random elements $s_2,\cdots,s_{n} \in \mathbb{Z}_N$ to form a vector $$\vec{\alpha}=(\alpha, s_2,\cdots,s_{n}).$$
      For each $i$ from 1 to $l$, compute $\lambda_i=A_i\vec{\alpha}$, where $A_i$ is the $i$-th row vector of $\mathbf{A}$. Let $\vec{u}=(u_1,...,u_k)$ be the attribute vector mapped by $\rho$ from the $i$-th row. Assume that the first coordinate $u_1$ of $\vec{u}$ is the $x$-th entry of the first row in the attribute matrix and choose $v_x$ correspondingly. Then, select random elements $r_i \in \mathbb{Z}_N$ and $R_{i,0}, R_{i,1}, R_{i,2}, R_{i,k+1},\cdots, R_{i,L} \in \mathbb{G}_{3}$ to compute
      $$K_{i,0}=g^{\lambda_i}v^{r_i}_xR_{i,0}, ~~K_{i,1}=g^{r_i}R_{i,1}, $$
      $$K_{i,2}=(h^{u_1}_1\cdots h^{u_k}_k)^{r_i}R_{i,2},$$
      $$K_{i,k+1}=h^{r_i}_{k+1}R_{i,k+1}, ~\cdots, ~K_{i,L}=h^{r_i}_LR_{i,L}.$$
      Set the secret key (including
      $(\mathbf{A},\rho)$) to be
      $$SK=\langle K_{i,0}, K_{i,1}, K_{i,2}, K_{i,k+1}, \cdots, K_{i,L}\rangle_{i=1,...,l}.$$

      \medskip
  \item \noindent$SK\leftarrow$ \textbf{Delegate}$(PK,SK', \mathbb{A})$: The algorithm generates a secret key $SK$ for $\mathbb{A}$ by using
      the secret key
      $$SK'=\langle K'_{i',0}, K'_{i',1}, K'_{i',2}, K'_{i',k+1}, \cdots, K'_{i',L}\rangle_{ i'=1,...,l'}$$
      for $\mathbb{A}'$, where $\mathbb{A}'$ is an access structure over $l'$ attribute vectors of depth $k$ and $\mathbb{A}$ is an access structure
      over $l$ attribute vectors of depth $k+1$. If $\mathbb{A}$ and $\mathbb{A}'$ satisfy the delegation condition, the algorithm works as follows.

      ~~~~For each $\vec{u}$ involved in $\mathbb{A}$, find its prefix $\vec{u}'$ in $\mathbb{A}'$ such that
      $\vec{u}=(\vec{u}', u_{k+1})$. Suppose that $\vec{u}'$ is associated with the $i'$-th row of the share-generating matrix of $\mathbb{A}'$. Choose
      random elements $\gamma_i, \delta_{i}\in \mathbb{Z}_N$ and random group elements $R_{i,0}, R_{i,1}, R_{i,2}, 
%JOSEP2. Corrected ,, to ,
R_{i,k+2}, ~\cdots, R_{i,L} \in
      \mathbb{G}_{3}$ for each $i$ from 1 to $l$. Then pick the key 
%JOSEP2. Forced line brak
component $(K'_{i',0},$ $K'_{i',1},K'_{i',2},K'_{i',k+1},\cdots, K'_{i',L})$ of $\vec{u}'$
      from $SK'$ to compute the key component for $\vec{u}$:
%       \begin{equation}
%       \begin{aligned}
%       K_{i,0}&=\left(K'_{i',0}\right)^{\gamma_{i}}v^{\delta_{i}}_{x}R_{i,0}\\
%       K_{i,1}&=\left(K'_{i',1}\right)^{\gamma_{i}}g^{\delta_{i}}R_{i,1},\\
%       K_{i,2}&=\left(K'_{i',2}\right)^{\gamma_{i}}\Big(K_{i',k+1}\Big)^{\gamma_{i}u_{k+1}}\left(h^{u_1}_1\cdots
%       h^{u_{k+1}}_{k+1}\right)^{\delta_{i}}R_{i,2},\\
%       K_{i,k+2}&=\left(K'_{i',k+2}\right)^{\gamma_{i}}h^{\delta_{i}}_{k+2}R_{i,k+2},\\
%       &\vdots\\
%       K_{i,L}&=\left(K'_{i',L}\right)^{\gamma_{i}}h^{\delta_{i}}_{L}R_{i,L}.\nonumber
%       \end{aligned}
%       \end{equation}

       $$K_{i,0}=\left(K'_{i',0}\right)^{\gamma_{i}}v^{\delta_{i}}_{x}R_{i,0},$$
       $$K_{i,1}=\left(K'_{i',1}\right)^{\gamma_{i}}g^{\delta_{i}}R_{i,1},$$
       $$K_{i,2}=\left(K'_{i',2}\right)^{\gamma_{i}}\Big(K_{i',k+1}\Big)^{\gamma_{i}u_{k+1}}\left(h^{u_1}_1\cdots
       h^{u_{k+1}}_{k+1}\right)^{\delta_{i}}R_{i,2},$$
       $$K_{i,k+2}=\left(K'_{i',k+2}\right)^{\gamma_{i}}h^{\delta_{i}}_{k+2}R_{i,k+2}, ~~\cdots, $$
       $$K_{i,L}=\left(K'_{i',L}\right)^{\gamma_{i}}h^{\delta_{i}}_{L}R_{i,L}.$$
       This implicitly sets $r_{i}=\gamma_{i}r'_{i'}+\delta_{i}$, where $r'_{i'}$ is the random exponent used in creating the key component for
       $\vec{u}'$. The value $r_{i}$ is random since $\delta_{i}$ is picked randomly.
       Finally, output
       $$SK=\langle K_{i,0}, K_{i,1}, K_{i,2}, K_{i,k+2} \cdots, K_{i,L}\rangle_{i=1,...,l}.$$
       Note that this key is identically distributed as the one directly generated by \textbf{KeyGen}.

\medskip
  \item $M\leftarrow$ \textbf{Decrypt}$(CT,SK,PK)$: Given a ciphertext $CT=\left(C,E,\langle C_{j,0},C_{j,1}\rangle_{j=1,...,|S|}\right)$ for $S$ of attribute
      vectors of depth $k$ and a secret key
   $$SK=\langle K_{i,0}, K_{i,1}, K_{i,2}, K_{i,k+1}, \cdots, K_{i,L}\rangle_{i=1,...,l}$$ for access structure $\mathbb{A}$ over attribute vectors of
   depth $k$, if $S\in \mathbb{A}$, compute the constants $\{\omega_i\in\mathbb{Z}_N\}_{\rho(i)\in S}$ such that $$\sum_{\rho(i)\in
   S}\omega_iA_i=(1,0,\cdots,0).$$
      Let $\rho(i)$ be the $j$-th attribute vector in $S$. Compute:
      $$M'=\prod_{\rho(i)\in S}\left(\frac{e\left(E, K_{i,0}\right)\cdot
      e\left(C_{j,1},K_{i,2}\right)}{e\left(C_{j,0},K_{i,1}\right)}\right)^{\omega_i}.$$
      Output $M=C/M'$.
\end{itemize}

\begin{remark}
In the key delegation, when delegating a secret key for $\mathbb{A}$ from a secret key for $\mathbb{A}'$, an LSSS $(\mathbf{A},\rho)$ for $\mathbb{A}$ is
simultaneously generated: the share-generating matrix $\mathbf{A}$ is formed by setting the $i$-th row as $A_i=A'_{i'}\gamma_{i}$, where $A'_{i'}$ is the
$i'$-th row of the share-generating matrix of $\mathbf{A}'$; the function $\rho$ maps the $i$-th row to the attribute vector $\vec{u}$. The value
$\lambda_{i}=\gamma_i\lambda'_{i'}=\gamma_i A'_{i'}\vec{\alpha}=A_i\vec{\alpha}$ is the share for $\vec{u}$, where $\lambda'_{i'}$ is the share for
$\vec{u}'$.
\end{remark}

\noindent\textbf{Correctness.} Observe that\\
$M'=$
\begin{equation}
   \begin{aligned}
   \prod_{\rho(i)\in S}\left(\frac{e\left(g^s,g^{\lambda_i}\right)\cdot e\left(g^s,v^{r_i}_{x}\right)\cdot e\left(g^{st_j},(h^{u_1}_1\cdot \cdot \cdot
h^{u_k}_k)^{r_i}\right)}{e\left(v^s_x,g^{r_i}\right)\cdot e\left((h^{u_1}_1\cdots h^{u_k}_{k})^{st_j}, g^{r_i}\right)}\right)^{\omega_i}\nonumber
   \end{aligned}
\end{equation}
$=e\left(g,g\right)^{s\Sigma_{\rho(i)\in S}\omega_iA_i\vec{\alpha}}=e(g,g)^{s\alpha}.$
%$$M'=\prod_{\rho(i)\in S}\left(\frac{e\left(g^s,g^{\lambda_i}\right)\cdot e\left(g^s,v^{r_i}_{x}\right)\cdot e\left(g^{st_j},(h^{u_1}_1\cdot \cdot \cdot
%h^{u_k}_k)^{r_i}\right)}{e\left(v^s_x,g^{r_i}\right)\cdot e\left((h^{u_1}_1\cdots h^{u_k}_{k})^{st_j}, g^{r_i}\right)}\right)^{\omega_i}$$
%$$=e\left(g,g\right)^{s\Sigma_{\rho(i)\in S}\omega_iA_i\vec{\alpha}}=e(g,g)^{s\alpha}.$$

It follows that $M=C/M'$. The $\mathbb{G}_{3}$ parts are canceled out because of the orthogonality property.

\subsubsection{Computational Complexity}

%JOSEP2. Slightly rewritten section.
We analyze the computational complexity of the 
main algorithms of the APR-ABE scheme, i.e., key generation, key delegation, encryption and decryption. The proposed scheme is built in bilinear groups 
$\mathbb{G}$ and $\mathbb{G}_T$, and most computations take place in the subgroup $\mathbb{G}_1$. Therefore we evaluate the times $t_p$ and $t_e$ 
consumed by the basic group operations, bilinear map and exponentiation in $\mathbb{G}_1$, respectively. We do not take into account the multiplication operation since it consumes negligible time compared to $t_p$ and $t_e$.

Table \ref{com-complexity} summarizes the time consumed by the main algorithms of the APR-ABE scheme. In this table, $L$ denotes the maximum depth of the system, $l$ the number of attribute vectors associated with a secret key, $l'$ the number of attribute vectors associated with a delegated key, $k$ the depth of the user delegating a key or the attribute vectors associated with a ciphertext, and $l^*$ is the number of attribute vectors of a set satisfying an access policy in the decryption. We can see that the time cost by the key generation algorithm grows linearly with the product of $L$ and $l$, but is independent of the user's depth. The time consumed by the delegation is related to the depth of the delegator and decreases as the depth grows. Encryption takes time linear in the product of the cardinality of the set $S$ and the depth of the attribute vectors in $S$. The ciphertexts of APR-ABE are short in that they are only linear in the cardinality of $S$. This makes the time consumed by decryption linear in the number of matching attribute vectors and independent of depth. This feature 
is comparable to the up-to-date KP-ABEs \cite{GPS+06,LOS+10,HW13}, which nonetheless do not allow the flexible key delegation achieved in our scheme.

\begin{table}[!t]
\renewcommand{\arraystretch}{1.5}
\caption{Computation}
\label{com-complexity}
\centering
\begin{tabular}{|c|c|}
\hline %\hline
Algorithm & Computational Complexity \\
\hline
\bfseries Key Generation & $(L+3)\cdot l\cdot t_e$\\
\hline
\bfseries Key Delegation & $(2L-k+5)\cdot l'\cdot t_e$\\
\hline
\bfseries Encryption & $\left((k+2)|S|+2\right)\cdot t_e$ \\
\hline
\bfseries Decryption & $3l^*\cdot t_p$ \\
\hline
\end{tabular}
\end{table}

\subsubsection{Security}
The new APR-ABE scheme has full security, which means that any polynomial-time attacker cannot get useful information about the messages encrypted in ciphertexts if he does not have correct secret keys. Formally, the full security is guaranteed by Theorem \ref{KP-security}.
\begin{theorem}\label{KP-security} The Access Policy 
%JOSEP2. Refinable -> Redefinable
Redefinable Attribute-based Encryption scheme is fully secure in the standard model if Assumptions 1, 2 and 3 hold.
\end{theorem}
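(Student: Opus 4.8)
The plan is to establish Theorem \ref{KP-security} through the dual system encryption methodology of Lewko and Waters, adapted to the access-policy-redefinable setting. First I would introduce two auxiliary objects with components in the $\mathbb{G}_2$ subgroup: \emph{semi-functional ciphertexts} and \emph{semi-functional keys}. Fixing a generator $g_2$ of $\mathbb{G}_2$, a semi-functional ciphertext is obtained from a normal one by multiplying $E$ by $g_2^{c}$ and each $C_{j,0}$ by $g_2^{c z_x}$ for appropriate exponents $c,z_x$; a semi-functional key is obtained from a normal key by multiplying $K_{i,0},K_{i,1},K_{i,2},K_{i,k+1},\dots,K_{i,L}$ by $\mathbb{G}_2$ terms driven by fresh exponents $\gamma_i$ and by $\mathbb{G}_2$-analogues of the $v_x$ and $h_j$. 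By the orthogonality of $\mathbb{G}_1,\mathbb{G}_2,\mathbb{G}_3$, a semi-functional key still decrypts a normal ciphertext correctly and a normal key still decrypts a semi-functional ciphertext correctly; only pairing a semi-functional key against a semi-functional ciphertext picks up an extra $\mathbb{G}_T$ factor, which is trivial exactly when the key is \emph{nominally} semi-functional.

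Next I would set up the hybrid sequence of games. Because \textbf{Delegate} produces keys identically distributed to those of \textbf{KeyGen} (as observed after the scheme description), I may assume without loss of generality that every key the challenger hands out — whether created directly or obtained through a chain of delegations — was produced by a single invocation of \textbf{KeyGen} on its access structure; hence only the $q$ keys actually \textsf{Reveal}ed matter in the hybrid. Let $\mathrm{Game}_{\mathrm{real}}$ be the real game, $\mathrm{Game}_0$ the same game with a semi-functional challenge ciphertext, $\mathrm{Game}_j$ ($1\le j\le q$) the game where additionally the first $j$ revealed keys are semi-functional and the rest normal, and $\mathrm{Game}_{\mathrm{final}}$ the game where the challenge ciphertext is a semi-functional encryption of a random message. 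I would then show: (i) $\mathrm{Game}_{\mathrm{real}}\approx\mathrm{Game}_0$ under Assumption 1, using the challenge term $T$ to build $E$ and the $C_{j,0}$ ($T\in\mathbb{G}_1$ gives a normal ciphertext, $T\in\mathbb{G}_{12}$ a semi-functional one); (ii) $\mathrm{Game}_{j-1}\approx\mathrm{Game}_j$ under Assumption 2, embedding $T$ into the $j$-th revealed key ($T\in\mathbb{G}_{13}$ yields a normal key, $T\in\mathbb{G}$ a nominally semi-functional one) while $X_1X_2$ and $Y_2Y_3$ furnish the semi-functional challenge ciphertext and the $\mathbb{G}_3$-randomizers; (iii) $\mathrm{Game}_q\approx\mathrm{Game}_{\mathrm{final}}$ under Assumption 3, where $g^\alpha X_2$ programs $e(g,g)^\alpha$, $g^s Y_2$ builds the semi-functional challenge ciphertext, every revealed key is made semi-functional (possible since $\alpha$ need no longer be known in the clear once each key carries a $\mathbb{G}_2$ mask), and $T$ serves as the blinding factor $e(g,g)^{\alpha s}$ of $M_b$. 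Chaining these inequalities bounds $Adv_{\mathcal{A}}^{\text{APR-ABE}}$ by a negligible quantity.

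The delicate point — and the one I expect to be the main obstacle — is step (ii): ruling out that the reduction can itself decide whether the $j$-th key is nominally or truly semi-functional by test-decrypting the challenge ciphertext. Here I would invoke the game restriction that for every revealed key with access structure $\mathbb{A}$ (and every key delegable from it) one has $S^*\notin\mathbb{A}$. Translated through the extended LSSS $(\mathbf{A},\rho)$ for $\mathbb{A}$, non-satisfaction means $(1,0,\dots,0)$ lies outside the span of the rows $A_i$ whose $\rho(i)$ is prefix-compatible with some attribute vector of $S^*$, so there is a $\vec{w}$ with $\langle\vec{w},(1,0,\dots,0)\rangle\neq 0$ orthogonal to all those rows. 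Using $\vec{w}$ to shift the $\mathbb{G}_2$-exponents of $\vec{\alpha}$, one argues that in $\mathcal{A}$'s view a nominally semi-functional $j$-th key is distributed \emph{identically} to a truly semi-functional one, the discrepancy being hidden precisely by the share of the secret modulo $p_2$ that $\mathcal{A}$ cannot reconstruct. A subtlety specific to APR-ABE is that rows of $\mathbf{A}$ may have been scaled by $\gamma_i$ factors through delegation and that the $\rho$-images are attribute \emph{vectors} rather than attributes; I would verify that the prefix-based delegation condition is exactly what keeps the span/orthogonality argument intact, and that the per-attribute-vector randomizers $r_i$ make the $\mathbb{G}_2$ components of distinct key components independent, which is what neutralizes the collusion attack mentioned in the introduction.

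The remaining work is routine bookkeeping: checking that the $\mathbb{G}_3$ components are correctly simulated from $X_3$; that $s$, the $t_j$, and the LSSS randomness $s_2,\dots,s_n$ carry the right distributions in each reduction; that the values implicitly set by the reductions (e.g.\ the exponents $c,z_x,\gamma_i$) are uniform; and that certain degenerate events (such as a simulated $\mathbb{G}_2$ exponent vanishing modulo $p_2$) occur only with negligible probability. Assembling these pieces with the three game transitions completes the proof.
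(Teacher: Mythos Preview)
Your proposal is correct and takes essentially the same approach as the paper: the same dual-system hybrid sequence, the same assignment of Assumptions~1, 2, 3 to the three types of game transitions, and the same linear-algebra argument (a vector $\vec{w}$ orthogonal to the $S^*$-compatible rows of $\mathbf{A}$, cf.\ the paper's Proposition~1) to hide nominal semi-functionality from the adversary. The only cosmetic difference is that the paper records your observation about \textbf{Delegate} producing keys identically distributed to \textbf{KeyGen} as a separate intermediate game $Game_{real'}$ with its own lemma, whereas you absorb it into the setup of the hybrid.
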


Our proof exploits the dual system encryption methodology \cite{Wat09}. This  approach  has been shown to be a powerful tool in proving the full security of properly designed HIBE and ABE schemes (e.g., \cite{LW10,LW11,LOS+10,Wat09}). Following this proof framework, we construct semi-functional keys and  ciphertexts for APR-ABE. A semi-functional APR-ABE key (semi-functional key for short) can be used to decrypt normal ciphertexts; and a semi-functional APR-ABE ciphertext (semi-functional ciphertext for short) can be decrypted by using normal keys. However, a semi-functional key cannot be used to decrypt a
semi-functional ciphertext. As in most proofs adopting dual system encryption, there is a subtlety that the simulator can test the nature of the challenge
key by using it to try to decrypt the challenge ciphertext. To avoid this paradox, we make sure that the decryption on input the challenge key is always
successful by cleverly setting the random values involved in the challenge key and challenge ciphertext. We also need to prove that these values are
uniformly distributed from the view of the adversary who cannot query the key able to decrypt the ciphertext.

In the following proof, we define a sequence of games arguing that an attacker cannot distinguish one game from the next. The first game is $Game_{real}$,
which denotes the real security game as defined in Definition \ref{KP-Security}. The second game is $Game_{real'}$, which is the same as $Game_{real}$
except that the attacker $\mathcal{A}$ does not ask the challenger $\mathcal{C}$ to delegate keys. The third game is $Game_0$, in which all keys are normal, but the challenge ciphertext is semi-functional. Let $q$ denote the number of key queries made by $\mathcal{A}$. For all $\nu=1,\cdots, q $, we define $Game_{\nu}$, in which the first $\nu$ keys are semi-functional and the remaining keys are normal, while the challenge ciphertext is semi-functional. Note that when $\nu=q$, in $Game_q$, all keys are semi-functional. The last game is defined as $Game_{final}$ where all keys are semi-functional and the ciphertext is a
semi-functional encryption of a random message. We will prove that these games are indistinguishable under Assumptions 1, 2 and 3.

The semi-functional ciphertexts and keys are constructed as follows.

\medskip
\noindent\textbf{Semi-functional ciphertext.} Let $g_2$ denote the generator of $\mathbb{G}_{2}$. We first invoke \textbf{Encrypt} to form a normal
ciphertext
$(\bar{C}, \bar{E}, \langle \bar{C}_{i^*,0}, \bar{C}_{i^*,1}\rangle_{i^*=1,...,|S^*|}).$
We choose a random element $c\in \mathbb{Z}_N$ and for all $i^*=1,~\cdots,~|S^*|$, select random exponents $ \varphi_{i^*}, \upsilon_{i^*} \in
\mathbb{Z}_N $. Set the semi-functional ciphertext to be
$$C=\bar{C}, ~E=\bar{E}g^c_2, ~C_{i^*,0}=\bar{C}_{i^*,0}g^{\varphi_{i^*}}_2, ~C_{i^*,1}=\bar{C}_{i^*,1}g^{\upsilon_{i^*}}_2.$$

\medskip
\noindent\textbf{Semi-functional key.} We first call algorithm \textbf{KeyGen} to form normal key
$\langle \bar{K}_{i,0}, \bar{K}_{i,1}, \bar{K}_{i,2}, \bar{K}_{i,k+1}, \cdots, \bar{K}_{i,L}\rangle_{i=1,...,l}.$
Then we choose random elements $f_i \in \mathbb{Z}_N$ for the $i$-th row of the share-generating matrix $\mathbf{A}$. We choose random elements
$\zeta_1,\zeta_2,...,\zeta_D, \eta_1,\eta_2,...,\eta_L\in\mathbb{Z}_N$ and a random vector $\vec{\vartheta}\in\mathbb{Z}^{n}_N$ . The semi-functional key
is set as:
$$K_{i,0}=\bar{K}_{i,0}g^{A_i\vec{\vartheta}+f_i\zeta_x}_2, ~K_{i,1}=\bar{K}_{i,1}g^{f_i}_2, $$
$$~K_{i,2}=\bar{K}_{i,2}g^{f_i\Sigma_{j=1}^{k}u_{j}\eta_j}_2,$$
$$K_{i,k+1}=\bar{K}_{(i,k+1)}g^{f_i\eta_{k+1}}_2, \cdots, K_{i,L}=\bar{K}_{(i,L)}g^{f_i\eta_L}_2.$$

\medskip

\begin{remark} When we use a semi-functional key to decrypt a semi-functional ciphertext, we will have an extra term
$$\prod_{\rho(i)\in S}\left(e(g_2, g_2)^{cA_i\vec{\vartheta}}e(g_2,
g_2)^{f_i\left(c\zeta_x+\upsilon_{i^*}\Sigma^k_{j=1}u_{j}\eta_j-\varphi_{i^*}\right)}\right)^{\omega_i}.$$
If $\vec{\vartheta}\cdot(1,0,\cdots,0)=0 \mod p_2$ and $c\zeta_x+\upsilon_{i^*}\Sigma^k_{j=1}u_{j}\eta_j-\varphi_{i^*}=0 \mod p_2$, then the extra term
happens to be one, which means that the decryption still works. We say that the keys satisfying this condition are \emph{nominally} semi-functional keys.
We will show that a nominally semi-functional key is identically distributed as a regular semi-functional key in the attacker's view.
\end{remark}

\begin{lemma}\label{lem1}~For any attacker $\mathcal{A}$, $$Game_{real}\mbox{Adv}_{\mathcal{A}}=Game_{real'}\mbox{Adv}_{\mathcal{A}}.$$
\end{lemma}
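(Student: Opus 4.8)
The plan is to prove the equality \emph{exactly}, not merely up to a negligible term, by showing that restricting the adversary away from \textsf{Delegate} queries costs it nothing: any \textsf{Delegate} query can be re-implemented through \textbf{KeyGen} in a way that is statistically invisible to $\mathcal{A}$.

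First I would pin down the only difference between the two games. The \textbf{Setup}, \textbf{Challenge} and \textbf{Guess} phases are verbatim identical; in particular the restriction imposed in the \textbf{Challenge} phase and the Phase~1/Phase~2 revealability conditions are phrased purely in terms of the access structures of revealed or delegatable keys, never in terms of \emph{how} a key was obtained, so the whole discrepancy lives in Phase~1 and Phase~2, where $Game_{real}$ allows \textsf{Create}, \textsf{Delegate}, \textsf{Reveal} while $Game_{real'}$ effectively forbids \textsf{Delegate}. I would therefore build, from any attacker $\mathcal{A}$, an attacker $\mathcal{A}'$ that issues no \textsf{Delegate} query: $\mathcal{A}'$ runs $\mathcal{A}$ internally, forwards \textsf{Create} and \textsf{Reveal} unchanged, and on a query $\textsf{Delegate}(\mathbb{A},\mathbb{A}')$ it instead issues $\textsf{Create}(\mathbb{A})$ whenever $\mathbb{A},\mathbb{A}'$ meet the delegation condition (and does nothing otherwise, exactly as \textbf{Delegate} would).

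The crux is then that $\mathcal{A}$'s view inside $\mathcal{A}'$ is distributed exactly as in $Game_{real}$, which reduces to the fact recorded in the remark after the \textbf{Delegate} algorithm: a delegated key is identically distributed to the output of \textbf{KeyGen} on the same access structure. I would verify this component by component, e.g. $K_{i,0}=(K'_{i',0})^{\gamma_i}v_x^{\delta_i}R_{i,0}=g^{\gamma_i\lambda'_{i'}}v_x^{\gamma_ir'_{i'}+\delta_i}\tilde R_{i,0}$, noting that in $K_{i,2}$ the extra factor $(K'_{i',k+1})^{\gamma_iu_{k+1}}$ supplies precisely the missing $h_{k+1}^{u_{k+1}}$ term so that the $h$-part becomes $(h_1^{u_1}\cdots h_{k+1}^{u_{k+1}})^{\gamma_ir'_{i'}+\delta_i}$; setting $r_i:=\gamma_ir'_{i'}+\delta_i$ and folding the inherited $\mathbb{G}_3$ factors raised to $\gamma_i$ into the fresh $R_{i,\cdot}$, the key has exactly the \textbf{KeyGen} shape for the share-generating matrix with rows $A_i=\gamma_iA'_{i'}$ and shares $\lambda_i=A_i\vec\alpha=\gamma_i\lambda'_{i'}$. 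Because $\delta_i$ is fresh and uniform, $r_i$ is uniform and independent of the parent's randomness, and because the fresh $R_{i,\cdot}$ are uniform in $\mathbb{G}_3$ so are the products; an induction on the length of a delegation chain (base case \textsf{Create}, which literally invokes \textbf{KeyGen}) then shows every key deposited in $\mathcal{K}$ is distributed as a \textbf{KeyGen} key for its structure, so collapsing each chain into a single \textbf{KeyGen} call leaves the transcript unchanged. Since $\mathcal{A}$ only ever obtains references plus revealed keys, and the challenge ciphertext, the messages $M_0,M_1$ and the coin $b$ are produced identically, $\Pr[b=b']$ is the same in both games, whence $Game_{real}\mbox{Adv}_{\mathcal{A}}=Game_{real'}\mbox{Adv}_{\mathcal{A}}$.

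The step I expect to be the main obstacle is exactly this identical-distribution check for \textbf{Delegate} versus \textbf{KeyGen}: one must confirm that $r_i$ is genuinely fresh (it is, through $\delta_i$), that the $\mathbb{G}_3$ masks remain uniform after multiplication by $\gamma_i$-powers of inherited elements, and that the rescaled pair $(A_i=\gamma_iA'_{i'},\rho)$ is still a legitimate LSSS for $\mathbb{A}$ computing the shares $\lambda_i=\gamma_i\lambda'_{i'}$. Everything beyond that — lining up the phases and checking that the $S^*$-restriction transfers because it is stated over access structures rather than over derivation histories — is routine bookkeeping.
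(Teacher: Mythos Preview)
Your proposal is correct and follows exactly the paper's approach: the paper's proof is the two-sentence observation that keys output by \textbf{Delegate} are identically distributed to keys output by \textbf{KeyGen}, hence $\mathcal{A}$'s view is the same in both games. You have simply expanded this into the explicit reduction and component-by-component verification that the paper leaves implicit (and already flagged in the note after \textbf{Delegate}).
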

\begin{proof} From the construction of our APR-ABE, the keys from the key generation algorithm are identically distributed as the keys from the delegation
algorithm. Therefore, in $\mathcal{A}$'s view, there is no difference between these two kinds of games.\qed
\end{proof}
\begin{lemma}\label{lem2} If  $\mathcal{A}$ can distinguish $Game_{real'}$ from $Game_{0}$ with advantage $\epsilon$, then we can establish an algorithm
$\mathcal{B}$ to break Assumption 1 with advantage $\epsilon$.
\end{lemma}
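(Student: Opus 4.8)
The goal is to show that $Game_{real'}$ (all keys normal, ciphertext normal) and $Game_0$ (all keys normal, ciphertext semi-functional) are indistinguishable under Assumption~1. The plan is to build a simulator $\mathcal{B}$ that takes an Assumption~1 instance $(\mathcal{D}=(\mathbb{G},g,X_3),T)$ where $T$ is either in $\mathbb{G}_1$ or in $\mathbb{G}_{12}$, and uses $\mathcal{A}$ to decide which. First I would have $\mathcal{B}$ run \textbf{Setup}: it picks $\alpha\in\mathbb{Z}_N$ and the exponents for $v_1,\dots,v_D,h_1,\dots,h_L$ itself (so it can express each $v_i=g^{a_i}$, $h_j=g^{b_j}$), uses the given $g\in\mathbb{G}_1$ and $X_3\in\mathbb{G}_3$, and publishes $PK$; since it knows $MSK=\alpha$ it answers all {\sf Create}/{\sf Reveal} queries honestly with genuinely normal keys (by Lemma~\ref{lem1} no {\sf Delegate} queries need be handled). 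For the challenge on $(M_0,M_1,S^*)$, $\mathcal{B}$ implicitly sets the encryption randomness $g^s$ to be the $\mathbb{G}_1$-part of $T$: it computes $C=M_b\,e(T,g)^{\alpha}$, $E=T$, and for each attribute vector in $S^*$ (with first-coordinate index $x$ and depth $k$) it picks fresh $t_j\in\mathbb{Z}_N$ and sets $C_{j,0}=T^{a_x}\,T^{t_j(b_1 u_1+\dots+b_k u_k)}$, $C_{j,1}=T^{t_j}$, exploiting the fact that $v_x$ and the product $h_1^{u_1}\cdots h_k^{u_k}$ are known powers of $g$.

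The crux of the argument is the case analysis on $T$. If $T=g^s\in\mathbb{G}_1$, then every component above is exactly a normal ciphertext with randomness $s$, so $\mathcal{A}$ sees $Game_{real'}$. If $T=g^s g_2^{\,c}\in\mathbb{G}_{12}$, then writing out the same expressions, the $\mathbb{G}_2$-part of $E$ is $g_2^c$, the $\mathbb{G}_2$-part of $C_{j,1}$ is $g_2^{c t_j}$, and the $\mathbb{G}_2$-part of $C_{j,0}$ is $g_2^{c a_x + c t_j(b_1u_1+\dots+b_ku_k)}$ — which matches the semi-functional form with $\varphi_j = c a_x + c t_j(\cdots)$ and $\upsilon_j = c t_j$. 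I would then observe that, because $t_j$ is chosen uniformly in $\mathbb{Z}_N$ and hence is uniform modulo $p_2$ independently of its value modulo $p_1$ (CRT), the induced exponents $(\varphi_j,\upsilon_j)$ are distributed exactly as in the definition of a semi-functional ciphertext; similarly $c$ modulo $p_2$ is uniform. Hence in this case $\mathcal{A}$ sees $Game_0$. Therefore $\mathcal{B}$ simply forwards $\mathcal{A}$'s output, and $\mathcal{B}$'s advantage against Assumption~1 equals $\mathcal{A}$'s distinguishing advantage $\epsilon$.

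The main obstacle I anticipate is verifying that the semi-functional randomness produced this way is correctly distributed rather than degenerate: the $a_x$ and $b_j$ values that $\mathcal{B}$ chose for the public key are reused inside $\varphi_j$, so one must check that the adversary's view of $PK$ together with the challenge ciphertext does not over-constrain these values modulo $p_2$. The point to nail down is that the public key lives entirely in $\mathbb{G}_1$ (and $\mathbb{G}_3$), so it reveals $a_x,b_j$ only modulo $p_1$, leaving them free modulo $p_2$; combined with the fresh randomizers $t_j$, the semi-functional exponents are then genuinely uniform and independent of everything $\mathcal{A}$ has seen. A secondary point worth stating explicitly is that $\mathcal{B}$ never needs $g_2$ itself — it only needs $T$ — which is exactly why Assumption~1 (and not a stronger one) suffices here; all keys remain normal so no $\mathbb{G}_2$ material is required for key generation.
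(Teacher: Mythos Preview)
Your proposal is correct and follows essentially the same approach as the paper's proof: pick the exponents for $v_i,h_j$ yourself so you know $MSK=\alpha$ and can answer all key queries with genuinely normal keys, then embed $T$ as the implicit $g^s$ in the challenge ciphertext and invoke the CRT to argue that the induced $\mathbb{G}_2$-exponents are properly distributed when $T\in\mathbb{G}_{12}$. The only cosmetic slip is that $\mathcal{B}$ should output a bit indicating whether $\mathcal{A}$'s guess $b'$ equals $b$ (rather than literally forwarding $b'$), since $\mathcal{A}$ is guessing the message bit, not the subgroup of $T$.
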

\begin{proof} We construct an algorithm $\mathcal{B}$ to simulate $Game_{real'}$ or $Game_{0}$ to interact with $\mathcal{A}$ by using the tuple
$(g,~X_3,~T)$ of Assumption 1.

\medskip
\noindent\textbf{Setup}: Algorithm $\mathcal{B}$ selects a random $\alpha \in \mathbb{Z}_N$. For all $i=1,\cdots, D$ and $j=1,\cdots,L$, it chooses random
elements $\bar{\zeta}_i, \bar{\eta}_j \in \mathbb{Z}_N$ and computes $v_i=g^{\bar{\zeta}_i}, h_j=g^{\bar{\eta}_j}$. It provides $\mathcal{A}$ with public
key:
$$PK=\left(\mathbf{U},N,~g,~v_1,~\cdots,~v_D,~h_1,~\cdots,~h_L,~e(g,g)^\alpha\right).$$

\noindent\textbf{Key generation Phase 1}, \textbf{Phase 2}: Note that $\mathcal{B}$ knows the master key $MSK=\alpha$. Therefore, $\mathcal{B}$ can run
\textbf{KeyGen} to generate normal keys in Phase 1 and Phase 2.

\medskip
\noindent\textbf{Challenge}: $\mathcal{A}$ gives two equal-length messages $M_0$ and $M_1$, and a set $S^*=\{\vec{u}\}$ of
attribute vectors
to $\mathcal{B}$. $\mathcal{B}$ then uses the $T$ in the given tuple to form a semi-functional or normal ciphertext as follows.

$\mathcal{B}$ flips a random coin $b\in\{0,1\}$. For all $i^*=1,~\cdots,~|S^*|$, it chooses random elements $t_{i^*} \in\mathbb{Z}_N$. Finally,
it  sets the semi-functional ciphertext $CT$ to be:
$$C=M_b e(g,T)^\alpha,\quad E=T, $$
$$ C_{i^*,0}=T^{\bar{\zeta}_x}T^{(\bar{\eta}_1u_{1}+\cdots+\bar{\eta}_ku_{k})t_{i^*}}, \quad C_{i^*,1}=T^{t_{i^*}}.$$
If assuming $T=g^sg^c_2$, this implicitly sets
$$\varphi_{i^*}=c(\bar{\zeta}_x+t_{i^*}\sum^k_{j=1}u_{j}\bar{\eta}_j), ~~\upsilon_{i^*}=ct_{i^*},$$
but there is neither unwanted correlation between values $(\varphi_{i^*}\mod p_2)$ and values $(\bar{\zeta}_x, \bar{\eta}_j \mod p_2)$, nor correlation between
$(t_{i^*} \mod p_2)$ and $(\upsilon_{i^*} \mod p_2)$ by the Chinese Remainder Theorem. Thus, the $\mathbb{G}_{1}$ part of the ciphertext is unrelated to
the $\mathbb{G}_{2}$ part.

\medskip
\noindent\textbf{Guess}:  If $T\in \mathbb{G}_{12}$, $CT$ is a properly distributed semi-functional ciphertext. Hence we are in $Game_0$. If $T\in
\mathbb{G}_{1}$, by implicitly setting $T=g^s$, $CT$ is a properly distributed normal ciphertext. Hence we are in $Game_{real'}$. If $\mathcal{A}$ outputs
$b'$ such that $b'=b$, then $\mathcal{B}$ outputs 0. Therefore, with the tuple $(g,X_3, T)$, we have that the advantage of $\mathcal{B}$ in breaking
Assumption 1 is
$$\left|\mbox{Pr}[\mathcal{B}(g, X_3, T\in \mathbb{G}_{12})=0]-\mbox{Pr}[\mathcal{B}(g,X_3, T\in
\mathbb{G}_{1})=0]\right|$$
$$=|Game_{0}\mbox{Adv}_{\mathcal{A}}-Game_{real'}\mbox{Adv}_{\mathcal{A}}|=\epsilon,$$
where $Game_{0}\mbox{Adv}_{\mathcal{A}}$ is the advantage of $\mathcal{A}$ in $Game_0$ and $Game_{real'}\mbox{Adv}_{\mathcal{A}}$ is the advantage of
$\mathcal{A}$ in $Game_{real'}$.
\qed

\end{proof}

\begin{lemma}\label{lem3} If  $\mathcal{A}$ can distinguish $Game_{\nu-1}$ from $Game_{\nu}$ with advantage $\epsilon$, then we can construct an algorithm
$\mathcal{B}$ to break Assumption 2 with advantage $\epsilon$.
\end{lemma}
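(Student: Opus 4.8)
The plan is to reduce the indistinguishability of $Game_{\nu-1}$ and $Game_{\nu}$ to Assumption~2. Given an instance $(\mathbb{G},g,X_1X_2,X_3,Y_2Y_3,T)$ with $T$ drawn either from $\mathbb{G}$ or from $\mathbb{G}_{13}$, algorithm $\mathcal{B}$ proceeds as follows. In \textbf{Setup} it chooses $\alpha,\bar\zeta_1,\dots,\bar\zeta_D,\bar\eta_1,\dots,\bar\eta_L\in\mathbb{Z}_N$, sets $v_i=g^{\bar\zeta_i}$, $h_j=g^{\bar\eta_j}$, hands $\mathcal{A}$ the public key $PK=(\mathbf{U},N,g,X_3,v_1,\dots,v_D,h_1,\dots,h_L,e(g,g)^{\alpha})$, and keeps $MSK=\alpha$. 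Since $\mathcal{B}$ holds the master key it can answer key queries honestly in the ``$\mathbb{G}_1\mathbb{G}_3$ skeleton''; by Lemma~\ref{lem1} we may assume $\mathcal{A}$ issues no {\sf Delegate} query, so there are $q$ {\sf Create} queries. For the first $\nu-1$ of them, $\mathcal{B}$ runs \textbf{KeyGen} and multiplies the components by suitable powers of $Y_2Y_3$ (with freshly chosen raw exponents per key), which yields a correctly distributed semi-functional key --- the extra $\mathbb{G}_3$ factors are harmless because $\bar K$'s $\mathbb{G}_3$ parts are already uniform, and the induced scaling by the $\mathbb{G}_2$-exponent of $Y_2Y_3$ preserves the required relations among the $\mathbb{G}_2$-exponents. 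For the last $q-\nu$ queries $\mathcal{B}$ simply runs \textbf{KeyGen}, obtaining normal keys. For the \textbf{Challenge}, $\mathcal{B}$ uses $X_1X_2$ exactly as in the proof of Lemma~\ref{lem2}: writing $X_1X_2=g^{a}g_2^{b}$, it sets $E=X_1X_2$, $C=M_b\,e(g,X_1X_2)^{\alpha}$, $C_{i^*,0}=(X_1X_2)^{\bar\zeta_x}(X_1X_2)^{t_{i^*}\sum_{j\le k}u_j\bar\eta_j}$ and $C_{i^*,1}=(X_1X_2)^{t_{i^*}}$; this is always a properly distributed semi-functional ciphertext, with the $\mathbb{G}_1$ and $\mathbb{G}_2$ parts uncorrelated by the Chinese Remainder Theorem.

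The $\nu$-th key is handled with $T$. Decompose $T=g^{\tau}g_2^{\tau'}g_3^{\tau''}$ (with $g_3$ a fixed generator of $\mathbb{G}_3$), so that $\tau'=0$ exactly when $T\in\mathbb{G}_{13}$. Having generated an LSSS $(\mathbf{A},\rho)$ for the queried $\mathbb{A}$, $\mathcal{B}$ chooses a sharing vector $\vec w=(\alpha,w_2,\dots,w_n)$, an auxiliary vector $\vec y=(0,y_2,\dots,y_n)$ with \emph{first coordinate zero}, a fresh $\tilde r_i\in\mathbb{Z}_N$ and fresh $\mathbb{G}_3$ elements $R_{i,\cdot}$ for each row $i$, and outputs $K_{i,1}=T^{\tilde r_i}R_{i,1}$, $K_{i,2}=(T^{\tilde r_i})^{\sum_{j\le k}u_j\bar\eta_j}R_{i,2}$, $K_{i,k+1}=(T^{\tilde r_i})^{\bar\eta_{k+1}}R_{i,k+1},\dots,K_{i,L}=(T^{\tilde r_i})^{\bar\eta_L}R_{i,L}$, and $K_{i,0}=g^{A_i\vec w}(T^{\tilde r_i})^{\bar\zeta_x}\,T^{A_i\vec y}\,R_{i,0}$. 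This implicitly sets $r_i=\tau\tilde r_i$, the semi-functional $\mathbb{G}_2$-coefficient $f_i=\tau'\tilde r_i$, $\zeta_x=\bar\zeta_x$ and $\eta_j=\bar\eta_j$ mod $p_2$, and $\vec\vartheta=\tau'\vec y$; the spurious $\mathbb{G}_1$-factor $g^{\tau A_i\vec y}$ in $K_{i,0}$ merely shifts the $\mathbb{G}_1$ sharing vector to $\vec w+\tau\vec y$, whose first coordinate is still $\alpha$ since $y_1=0$. Hence when $\tau'=0$ this is a properly distributed normal key and $\mathcal{B}$ simulates $Game_{\nu-1}$; when $\tau'\ne 0$ it is a semi-functional key \emph{except} that $\vec\vartheta\cdot(1,0,\dots,0)=\tau'y_1=0\bmod p_2$, i.e.\ it is only \emph{nominally} semi-functional, and the claim is that $\mathcal{B}$ then simulates $Game_{\nu}$.

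The step I expect to be the main obstacle is justifying that last claim: that the nominally semi-functional $\nu$-th key is distributed, in $\mathcal{A}$'s view, exactly like a genuine semi-functional key, despite the constraint $\vec\vartheta_1\equiv 0\bmod p_2$ and despite the key's $\mathbb{G}_2$-parameters $\zeta_x,\eta_j$ coinciding mod $p_2$ with the values $\bar\zeta_x,\bar\eta_j$ that also govern the challenge ciphertext's $\mathbb{G}_2$ part (which are independent in a true $Game_{\nu}$). Here the security-game restriction is what saves us: if $\mathcal{A}$ {\sf Reveal}s the $\nu$-th key then $S^*\notin\mathbb{A}$, and the same holds for every access structure delegatable from it, so the prefix/attribute-vector structure of APR-ABE (the components $K_{i,k+1},\dots,K_{i,L}$, the depth matching) introduces no extra leakage. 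By the Chinese Remainder Theorem, $\bar\zeta_x,\bar\eta_j\bmod p_2$ are uniform and independent of $PK$, and the normal $\mathbb{G}_1$/$\mathbb{G}_3$ portions of every key and of the ciphertext reveal nothing about them; so the only channel through which $\mathcal{A}$ could learn anything about $\vec\vartheta_1\bmod p_2$ or about those residues is to pair the revealed $\nu$-th key against the semi-functional challenge ciphertext and take a linear combination of rows. An informative combination is exactly a reconstruction $\sum_{\rho(i)\in S^*}\omega_i A_i=(1,0,\dots,0)$, which does not exist because $S^*\notin\mathbb{A}$; by the secret-hiding property of the LSSS the rows $\{A_i:\rho(i)\in S^*\}$ leave $\vec\vartheta_1\bmod p_2$ uniformly distributed, while along any non-authorized combination the cross terms $f_i\bigl(c\zeta_x+\upsilon_{i^*}\sum_{j\le k}u_j\eta_j-\varphi_{i^*}\bigr)$ appearing in the Remark act as fresh independent masks. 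Thus $\mathcal{A}$'s view is unchanged whether $\vec\vartheta_1$ is $0$ or uniform, so $\mathcal{B}$ perfectly simulates $Game_{\nu-1}$ when $T\in\mathbb{G}_{13}$ and $Game_{\nu}$ when $T\in\mathbb{G}$. Finally, $\mathcal{B}$ echoes $\mathcal{A}$'s success --- outputting $0$ when $\mathcal{A}$'s guess $b'$ equals $b$ and $1$ otherwise --- so that $\mbox{Adv2}_{\mathcal{B}}(\ell)=|Game_{\nu-1}\mbox{Adv}_{\mathcal{A}}-Game_{\nu}\mbox{Adv}_{\mathcal{A}}|=\epsilon$, up to the negligible chance that the $\mathbb{G}_1$- or $\mathbb{G}_2$-component of $T$ is trivial.
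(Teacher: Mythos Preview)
Your proposal is correct and follows essentially the same approach as the paper: the simulation of Setup, the first $\nu-1$ semi-functional keys via powers of $Y_2Y_3$, the remaining normal keys via \textbf{KeyGen}, the semi-functional challenge via $X_1X_2$, and the $\nu$-th key via $T$ together with an auxiliary sharing vector whose first coordinate is zero are exactly what the paper does (your $\vec w,\vec y,\tilde r_i$ are the paper's $\bar{\vec\alpha},\bar{\vec\vartheta},\bar r_i$). The only place where the paper is more explicit is the information-theoretic hiding step: rather than invoking ``the secret-hiding property of the LSSS'' and saying the cross terms ``act as fresh independent masks,'' the paper writes $\vec\vartheta=z\vec w+\vec r$ for a vector $\vec w$ orthogonal to $\{A_i:\rho(i)\in S^*\}$ with $\vec w\cdot(1,0,\dots,0)=1$, and then uses the \emph{injectivity} of $\rho$ to argue that for rows with $\rho(i)\notin S^*$ the term $f_i\zeta_x$ is a fresh mask---a point your sketch relies on implicitly but does not state.
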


\begin{proof} We construct an algorithm $\mathcal{B}$ to simulate $Game_{\nu-1}$ or $Game_{\nu}$ to interact with $\mathcal{A}$ by using the tuple
$(g, X_1X_2, X_3,\\ Y_2Y_3, T)$ of Assumption 2.

\medskip
\noindent\textbf{Setup}: The public key $PK$ generated by $\mathcal{B}$ is the same as that in Lemma 2. Algorithm $\mathcal{B}$ gives $PK$ to
$\mathcal{A}$.

\medskip
\noindent\textbf{Challenge}: For convenience, we bring the Challenge phase before Phase1. This will not affect the security proof. When $\mathcal{A}$
queries the challenge ciphertext with two equal-size messages $M_0, M_1$ and a set $S^*$ of attribute vectors, $\mathcal{B}$ flips a random coin
$b\in\{0,1\}$ and randomly chooses $t_{i^*}\in\mathbb{Z}_N$ for all $i^*=1,\cdots, |S^*|$. It sets the ciphertext to be
        $$C=M_b e(g,X_1X_2)^{\alpha}, E=X_1X_2,$$
        $$C_{i^*,0}=(X_1X_2)^{\bar{\zeta}_x}(X_1X_2)^{(\bar{\eta}_1u_{1}+\cdots+\bar{\eta}_ku_{k})t_{i^*}}, $$
        $$C_{i^*,1}=(X_1X_2)^{t_{i^*}}.$$
By assuming $X_1X_2=g^sg^c_2$, this implicitly sets
$$\varphi_{i^*}=c(\bar{\zeta}_x+t_{i^*}\sum^k_{j=1}u_{j}\bar{\eta}_j), ~\upsilon_{i^*}=ct_{i^*}.$$
Again there is no correlation between values $(\varphi_{i^*} \mod  p_2)$ and values $(\bar{\zeta}_x, \bar{\eta}_j \mod  p_2)$, nor is there any correlation between $(t_{i^*}
\mod  p_2)$ and $(\upsilon_{i^*} \mod  p_2)$ by the Chinese Remainder Theorem. Thus the $\mathbb{G}_{1}$ part is unrelated to the $\mathbb{G}_{2}$ part of
this ciphertext. Therefore, this ciphertext is a well distributed semi-functional ciphertext.

\medskip
\noindent\textbf{Key generation Phase 1}, \textbf{Phase 2}: For the first $\nu-1$ key queries, $\mathcal{B}$ simulates the semi-functional keys. For a
queried $\mathbb{A}$, it first calls the key generation algorithm to generate an LSSS $(\mathbf{A},\rho)$ and a normal key $\langle \bar{K}_{i,0},
\bar{K}_{(i,1)}, \bar{K}_{i,2}, \bar{K}_{i,k+1},\\ \cdots, \bar{K}_{i,L}\rangle_{\forall i\in[l]}$ for this LSSS. Then, for each $i$ from 1 to $l$, $\mathcal{B}$ picks a random element $\bar{f}_i\in\mathbb{Z}_N$. $\mathcal{B}$ also chooses random elements
$\zeta_1,...,\zeta_D,\eta_1,...,\eta_L\in\mathbb{Z}_N$. Finally $\mathcal{B}$ chooses a random vector $\bar{\vec{\vartheta}}\in\mathbb{Z}^{n}_N$ and computes the
secret key:
$$K_{i,0}=\bar{K}_{i,0}(Y_2Y_3)^{A_i\bar{\vec{\vartheta}}+\bar{f}_i\zeta_x},~K_{i,1}
=\bar{K}_{i,1}(Y_2Y_3)^{\bar{f}_i},$$
$$K_{i,2}=\bar{K}_{i,2}(Y_2Y_3)^{\bar{f}_i\sum^k_{j=1}u_{j}\eta_j},$$
$$K_{i,k+1}=\bar{K}_{i,k+1}(Y_2Y_3)^{\bar{f}_i\eta_{k+1}},~\cdots,$$
$$K_{i,L}=\bar{K}_{i,L}(Y_2Y_3)^{\bar{f}_i\eta_{L}}.$$
If we assume $Y_2=g^a_2$ for some $a$, this implicitly sets $\vec{\vartheta}=a\bar{\vec{\vartheta}}, f_i=a\bar{f}_i.$ Thus this key is identically
distributed as the semi-functional key.

For the rest of key queries but the $\nu$-th one, $\mathcal{B}$ simulates normal keys. Because $\mathcal{B}$ knows the master key $MSK=\alpha$, it can easily
create the normal keys by running the key generation algorithm.

To respond to the $\nu$-th key query on an access structure $\mathbb{A}$, algorithm $\mathcal{B}$ will either simulate a normal key or a semi-functional key
depending on $T$. Algorithm $\mathcal{B}$ generates an LSSS for $\mathbb{A}$ to prepare for key generation. It creates a vector $\bar{\vec{\alpha}}$ with
the first coordinate equal to $\alpha$ and the remaining $n-1$ coordinates picked randomly in $\mathbb{Z}_N$. $\mathcal{B}$ also creates a vector
$\bar{\vec{\vartheta}}$ with the first coordinate equal to 0 and the remaining $n-1$ coordinates picked randomly in $\mathbb{Z}_N$. For each row $A_i$ of
$\mathbf{A}$, $\mathcal{B}$ chooses random elements
$\bar{r}_i \in \mathbb{Z}_N$ and  $R_{i,0}, R_{i,1}, R_{i,2}, R_{i,k+1}, \cdots, R_{i,L}\in \mathbb{G}_{3}.$
Then $\mathcal{B}$ computes:
$$K_{i,0}=g^{A_i\bar{\vec{\alpha}}}T^{A_i\bar{\vec{\vartheta}}}T^{\bar{r}_i\bar{\zeta}_x}R_{i,0},~K_{i,1}
=T^{\bar{r}_i}R_{i,1},$$
$$K_{i,2}=T^{(\bar{\eta}_1u_{1}+\cdots+\bar{\eta}_ku_{k})\bar{r}_i}R_{i,2},$$
$$K_{i,k+1}=T^{\bar{r}_i\bar{\eta}_{k+1}}R_{i,k+1},~\cdots,~K_{i,L}=T^{\bar{r}_i\bar{\eta}_L}R_{i,L}.$$
If $T\in \mathbb{G}_{13}$, by assuming $T=g^{c_1}g^{c_3}_3$, this implicitly sets
$r_i=c_1\bar{r}_i$ and $\vec{\alpha}=\bar{\vec{\alpha}}+c_1\bar{\vec{\vartheta}}.$
Thus this key is identically distributed as the normal key. If $T\in \mathbb{G}$, by assuming $T=g^{c_1}g^{c_2}_2g^{c_3}_3$, this implicitly sets:
$$r_i=c_1\bar{r}_i, f_i=c_2\bar{r}_i,\vec{\vartheta}=c_2\bar{\vec{\vartheta}}, \vec{\alpha}=\bar{\vec{\alpha}}+c_1\bar{\vec{\vartheta}},$$
and
$\zeta_1=\bar{\zeta}_1,...,\zeta_D=\bar{\zeta}_D, \eta_1=\bar{\eta}_1,...,\eta_L=\bar{\eta}_L.$
Since $r_i$ are created by $\bar{r}_i$ in $\mathbb{G}_{1}$ and $f_i$ are created by $\bar{r}_i$ in $\mathbb{G}_{2}$, there is no unwanted correlation
between the $\mathbb{G}_{1}$ part and the $\mathbb{G}_{2}$ part by the Chinese Remainder Theorem. Similarly, the fact
$\zeta_1=\bar{\zeta}_1,...,\zeta_D=\bar{\zeta}_D, \eta_1=\bar{\eta}_1,...,\eta_L=\bar{\eta}_L$ will not result in unwanted correlation between the
$\mathbb{G}_{1}$  and the $\mathbb{G}_{2}$  of this key.

When the simulator $\mathcal{B}$ uses the $\nu$-th key to decrypt the semi-functional ciphertext  to test whether the key is normal or
semi-functional, it will obtain
\begin{equation}\label{KP-formula}
        \prod_{\rho(i)\in S^*}\left(e(g_2, g_2)^{cA_i\vec{\vartheta}}e(g_2,
        g_2)^{f_i\left(c\zeta_x+\upsilon_{i^*}\Sigma^k_{j=1}u_{j}\eta_j-\varphi_{i^*}\right)}\right)^{\omega_i} \nonumber
\end{equation}
$=1.$
This is because from the simulation of semi-functional ciphertext we have that $$\varphi_{i^*}=c(\bar{\zeta}_x+t_{i^*}\sum^k_{j=1}u_{j}\bar{\eta}_j),
~\upsilon_{i^*}=ct_{i^*}$$ and from the simulation of the $\nu$-th key, we have that $$\zeta_1=\bar{\zeta}_1,...,\zeta_D=\bar{\zeta}_D,
\eta_1=\bar{\eta}_1,...,\eta_L=\bar{\eta}_L.$$  Moreover, since the inner product
$$\vec{\vartheta}\cdot(1,0,\cdots,0) = c\bar{\vec{\vartheta}}\cdot(1,0,\cdots,0) =0, $$
$\sum_{\rho(i)\in S^*}\omega_iA_i\vec{\vartheta}=0.$
Thus, when $\mathcal{B}$ uses the ${\nu}$-th key to decrypt the semi-functional ciphertext, the decryption will still work and the ${\nu}$-th key is
nominally semi-functional. Now, we argue that the nominally semi-functional key is identically distributed as a semi-functional key in $\mathcal{A}$'s
view. That is, if $\mathcal{A}$ is prevented from asking the ${\nu}$-th key that can decrypt the challenge ciphertext, the fact that $\vartheta_1=0$
($\vartheta_1$ is set as the first coordinate of $\vec{\vartheta}$) should be information-theoretically hidden in $\mathcal{A}$'s view.

Because the ${\nu}$-th key cannot decrypt the challenge ciphertext, the vector $(1,0,\cdots,0)$ is linearly independent of ${\bf R}_{S^*}$, which is a
submatrix of $\mathbf{A}$ and contains only those rows that satisfy $\rho(i)\in S^*$. From the basics of linear algebra and similarly to Proposition 11 in
\cite{LOS+10}, we have the following proposition:

\begin{proposition}~A vector $\vec{v}$ is linearly independent of a set of vectors represented by a matrix $\mathbf{M}$ if and only if there exists a
vector $\vec{w}$ such that $\mathbf{M}\vec{w}=\vec{0}$ while $\vec{v}\cdot\vec{w}=1.$
\end{proposition}

Since $(1,0,\cdots,0)$ is linearly independent of $\mathbf{R}_{S^*}$, a vector $\vec{w}$ must exist such that for each row $A_i \in
\mathbf{R}_{S^*}$, it holds that
$A_i\cdot\vec{w}=0,\vec{w}\cdot(1,0,\cdots,0)=1.$
Then for the fixed vector $\vec{w}$, we can write
$$A_i\cdot\vec{\vartheta}=A_i\cdot(z\vec{w}+\vec{r}),$$
where $\vec{r}$ is uniformly distributed and reveals no information about $z$. We note that $\vec{\vartheta}\cdot (1,0,...,0)$ can not be determined from
$\vec{r}$ alone, some information about $z$ is also needed.
If $\rho(i)\in S^*$, then $A_i\cdot\vec{\vartheta}=A_i\cdot\vec{r}$. Thus, no information about $z$ is revealed and $A_i\cdot\vec{\vartheta}$ is hidden.
If $\rho(i)\not\in S^*$, then $A_i\cdot\vec{\vartheta}+f_i\zeta_x=A_i\cdot(z\vec{w}+\vec{r})+f_i\zeta_x.$
This equation introduces a random element $f_i\zeta_x$, where $f_i$ is random and appears only once because $\rho$ is injective. Hence if not
all of the $f_i$ values are congruent to 0 modulo $p_2$, no information about $z$ is revealed. The probability that all $f_i$'s are 0 modulo $p_2$ is negligible.
Therefore, the value being shared in $\mathbb{G}_{2}$ is information-theoretically hidden in $\mathcal{A}$'s view with probability close to 1. Hence,
$\mathcal{B}$ simulates the semi-functional keys with a probability close to 1.

\medskip
\noindent\textbf{Guess}: If $T\in \mathbb{G}_{13}$, we are in $Game_{\nu-1}$. If $T\in \mathbb{G}$, we are in $Game_{\nu}$. If $\mathcal{A}$ outputs
$b'=b$, $\mathcal{B}$ outputs 0. Then, with the input tuple $(g, X_1X_2, X_3, Y_2Y_3, T)$, the advantage of $\mathcal{B}$ in breaking Assumption 2 is:
$$|\mbox{Pr}[\mathcal{B}(g, X_1X_2, X_3, Y_2Y_3, T\in \mathbb{G}_{13})=0]-$$$$\mbox{Pr}[\mathcal{B}(g, X_1X_2, X_3, Y_2Y_3, T\in \mathbb{G})=0]|$$
$$=|Game_{\nu-1}\mbox{Adv}_{\mathcal{A}}-
Game_{\nu}\mbox{Adv}_{\mathcal{A}}|=\epsilon,$$

\noindent where $Game_{\nu-1}\mbox{Adv}_{\mathcal{A}}$ is the advantage of $\mathcal{A}$ in $Game_{\nu-1}$ and $Game_{\nu}\mbox{Adv}_{\mathcal{A}}$ is the
advantage of $\mathcal{A}$ in $Game_{\nu}$.
\qed
\end{proof}

\begin{lemma}\label{lem4}~If  $\mathcal{A}$ can distinguish $Game_{q}$ from $Game_{final}$ with advantage $\epsilon$, then we can construct an algorithm
$\mathcal{B}$ that contradicts Assumption 3 with advantage $\epsilon$.
\end{lemma}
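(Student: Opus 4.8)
The plan is to build a PPT algorithm $\mathcal{B}$ that, on an Assumption 3 instance $(\mathbb{G},g,g^{\alpha}X_2,X_3,g^{s}Y_2,Z_2,T)$, perfectly simulates $Game_{q}$ for $\mathcal{A}$ when $T=e(g,g)^{\alpha s}$ and perfectly simulates $Game_{final}$ when $T$ is uniform in $\mathbb{G}_{T}$. Since these two games have identically distributed (fully) semi-functional keys and differ only in whether the challenge ciphertext encrypts $M_{b}$ or a uniformly random message, a distinguishing advantage $\epsilon$ of $\mathcal{A}$ translates immediately into advantage $\epsilon$ for $\mathcal{B}$ against Assumption 3. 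In particular, because no revealed key may satisfy $S^{*}$, $\mathcal{B}$ never holds a key that could let it decrypt the challenge ciphertext itself, so there is no decryption-test paradox to resolve here.

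In \textbf{Setup}, $\mathcal{B}$ outputs a public key exactly as in the proof of Lemma \ref{lem2}: it publishes $e(g,g)^{\alpha}=e(g,g^{\alpha}X_2)$ --- the $\mathbb{G}_{2}$ factor disappears by orthogonality, so $\mathcal{B}$ needs neither $\alpha$ nor a copy of $g^{\alpha}$ in $\mathbb{G}_1$ --- and sets $v_{i}=g^{\bar{\zeta}_{i}}$, $h_{j}=g^{\bar{\eta}_{j}}$ with self-chosen $\bar{\zeta}_{i},\bar{\eta}_{j}\in\mathbb{Z}_{N}$, keeping $X_3$ to generate $\mathbb{G}_{3}$ masks as random powers. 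For the \textbf{Challenge} ciphertext on $M_0,M_1,S^{*}$, $\mathcal{B}$ flips $b$, picks random $t_{i^{*}}$, and uses $g^{s}Y_2$ in the role of $g^{s}$: it sets $C=M_{b}T$, $E=g^{s}Y_2$, $C_{i^{*},0}=(g^{s}Y_2)^{\bar{\zeta}_{x}}(g^{s}Y_2)^{(\bar{\eta}_{1}u_{1}+\cdots+\bar{\eta}_{k}u_{k})t_{i^{*}}}$, and $C_{i^{*},1}=(g^{s}Y_2)^{t_{i^{*}}}$. Writing $Y_2=g_{2}^{c}$, this implicitly fixes $\varphi_{i^{*}}=c(\bar{\zeta}_{x}+t_{i^{*}}\sum_{j=1}^{k}u_{j}\bar{\eta}_{j})$ and $\upsilon_{i^{*}}=ct_{i^{*}}$, which by the Chinese Remainder Theorem are uniform mod $p_{2}$ and uncorrelated with the $\mathbb{G}_{1}$ exponents; hence the $\mathbb{G}_{1}$ part is an honest normal ciphertext and the $\mathbb{G}_{2}$ part a well-formed semi-functional tag. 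If $T=e(g,g)^{\alpha s}$ then $C=M_{b}e(g,g)^{\alpha s}$ is a semi-functional encryption of $M_{b}$ and we are in $Game_{q}$; if $T$ is uniform, $C=M_{b}T$ is uniform in $\mathbb{G}_{T}$ and independent of $b$, a semi-functional encryption of a random message, so we are in $Game_{final}$.

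In \textbf{Key generation} (both phases) every queried key must be semi-functional, and $\mathcal{B}$ must produce it without knowing $\alpha$. For a queried $\mathbb{A}$, $\mathcal{B}$ builds an LSSS $(\mathbf{A},\rho)$, picks $s_{2},\ldots,s_{n}$ so that $\vec{\alpha}$ has implicit first coordinate $\alpha$, and forms the share term $g^{\lambda_{i}}=(g^{\alpha}X_2)^{A_{i,1}}g^{\sum_{j\geq 2}A_{i,j}s_{j}}$, which yields at once the needed $\mathbb{G}_{1}$ part $g^{A_{i}\vec{\alpha}}$ and a $\mathbb{G}_{2}$ part $X_2^{A_{i,1}}$. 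It then injects the remaining semi-functional structure via $Z_2$: with a fresh per-key vector $\bar{\vec{\vartheta}}$, fresh per-row $\bar{f}_{i}$, and fresh $\zeta_{1},\ldots,\zeta_{D},\eta_{1},\ldots,\eta_{L}$, it multiplies $Z_2^{A_{i}\bar{\vec{\vartheta}}+\bar{f}_{i}\zeta_{x}}$ into $K_{i,0}$, $Z_2^{\bar{f}_{i}}$ into $K_{i,1}$, $Z_2^{\bar{f}_{i}\sum_{j=1}^{k}u_{j}\eta_{j}}$ into $K_{i,2}$, and $Z_2^{\bar{f}_{i}\eta_{\ell}}$ into $K_{i,\ell}$ for $\ell=k+1,\ldots,L$, filling the $\mathbb{G}_{1}$ exponents from the known $\bar{\zeta}_{\bullet},\bar{\eta}_{\bullet}$ and a fresh $r_{i}$, and appending random powers of $X_3$ as the $\mathbb{G}_{3}$ masks. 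Writing $X_2=g_{2}^{x_{2}}$, $Z_2=g_{2}^{z_{2}}$, this implicitly sets $\vec{\vartheta}=x_{2}(1,0,\ldots,0)+z_{2}\bar{\vec{\vartheta}}$ and $f_{i}=z_{2}\bar{f}_{i}$; because $z_{2}\neq 0$ with overwhelming probability, the affine maps involved are bijective, so $\vec{\vartheta}$, the $f_{i}$, and the $\zeta_{\bullet},\eta_{\bullet}$ are uniform and mutually independent (and independent of the $\mathbb{G}_{1}$ and $\mathbb{G}_{3}$ parts, by the Chinese Remainder Theorem) --- exactly the distribution of a semi-functional key in both games.

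Finally $\mathcal{B}$ runs $\mathcal{A}$ to the end; if $\mathcal{A}$ outputs $b'=b$ then $\mathcal{B}$ outputs $0$ (guessing $T=e(g,g)^{\alpha s}$), else $1$, achieving advantage $|Game_{q}\mbox{Adv}_{\mathcal{A}}-Game_{final}\mbox{Adv}_{\mathcal{A}}|=\epsilon$. I expect the only delicate point --- not really an obstacle --- to be the distributional check in the previous paragraph: verifying that the correlation forced by $(g^{\alpha}X_2)^{A_{i,1}}$ plus the $Z_2$-randomization yields a $\mathbb{G}_{2}$-exponent pattern $(\vec{\vartheta},f_{i})$ that is uniform, independent across keys, and decoupled from the $\mathbb{G}_{1}$ discrete logs $\bar{\zeta}_{x},\bar{\eta}_{j}$. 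This is the same Chinese-Remainder bookkeeping already carried out in Lemmas \ref{lem2} and \ref{lem3}; and, unlike Lemma \ref{lem3}, this step needs no ``nominally semi-functional'' argument nor the linear-algebra fact used there, since the two games in question agree on every key and differ only in the plaintext of the challenge ciphertext.
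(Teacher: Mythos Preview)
Your proposal is correct and follows essentially the same approach as the paper: identical Setup (publishing $e(g,g^{\alpha}X_2)$ and $v_i=g^{\bar\zeta_i}$, $h_j=g^{\bar\eta_j}$), identical Challenge (using $g^{s}Y_2$ in place of $g^{s}$ and $M_bT$ for $C$), and the same use of $g^{\alpha}X_2$ to embed $\alpha$ in the shares together with $Z_2$ to supply the semi-functional $\mathbb{G}_2$ randomness. The only cosmetic difference is that the paper randomizes the non-first coordinates of $\vec{\vartheta}$ by raising $g^{\alpha}X_2$ to $A_i\vec{\phi}$ with a fresh $\vec{\phi}=(1,\phi_2,\ldots,\phi_n)$ (and sets $\vec{\alpha}=\alpha\vec{\phi}+\vec{\psi}$), whereas you fix $\vec{\phi}=(1,0,\ldots,0)$ and instead add a separate $Z_2^{A_i\bar{\vec{\vartheta}}}$ term; both yield a correctly distributed semi-functional key.
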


\begin{proof} We construct $\mathcal{B}$ to simulate $Game_{q}$ or $Game_{final}$ to interact with $\mathcal{A}$ by using the tuple $(g, g^\alpha
X_2, X_3, g^sY_2, Z_2, T)$ of Assumption 3.

\medskip
\noindent\textbf{Setup}: For all $i=1,~\cdots, D$ and all $j=1,~\cdots, L$, $\mathcal{B}$ chooses random exponents $\bar{\zeta}_i, \bar{\eta}_j \in
\mathbb{Z}_N$ and computes $v_i=g^{\bar{\zeta}_i}, h_j=g^{\bar{\eta}_j}.$
Then it sets
$$PK=(\mathbf{U},N, g, v_1, \cdots, v_D, h_1, \cdots, h_L, e(g, g^\alpha X_2))$$ and gives $PK$ to $\mathcal{A}$. We note that $\mathcal{B}$ does not know
the secret $\alpha$.

\medskip
\noindent\textbf{Key generation Phase 1}, \textbf{Phase 2}: To simulate the semi-functional keys for $\mathbb{A}$, $\mathcal{B}$ first generates an LSSS
$(\mathbf{A},\rho)$ for $\mathbb{A}$. It then selects two vectors: $\vec{\phi}$, which has the first coordinate set to 1 and the remaining $n-1$
coordinates randomly chosen in $\mathbb{Z}_N$, and $\vec{\psi}$, which has the first coordinate set to 0 and the remaining $n-1$ coordinates randomly
chosen in
$\mathbb{Z}_N$. We note that this implicitly sets $\vec{\alpha}=\alpha\vec{\phi}+\vec{\psi}.$

For the $i$-th row $A_i$ of $\mathbf{A}$, algorithm $\mathcal{B}$ chooses random elements $r_i, \bar{f}_i\in \mathbb{Z}_N;  R_{i,0}, R_{i,1}, R_{i,2},
R_{i,k+1}, \cdots, R_{i,L}\in \mathbb{G}_{3}.$
$\mathcal{B}$ randomly chooses $\zeta_1,...,\zeta_D, \eta_1,...,\eta_L \in \mathbb{Z}_N$ and computes the key as follows:
\begin{equation}
       \begin{aligned}
       K_{i,0}&=g^{A_i\vec{\psi}}\left(g^\alpha X_2\right)^{A_i\vec{\phi}}v^{r_i}_{x}Z^{\bar{f}_i\zeta_x}_2R_{i,0},\\
       K_{i,1}&=g^{r_i}Z^{\bar{f}_i}_2R_{i,1},\\
       K_{i,2}&=\left(h^{u_{1}}_1\cdots h^{u_{k}}_k\right)^{r_i}Z^{\bar{f}_i\sum^k_{j=1}u_{j}\eta_j}_2R_{i,2},\\
       K_{i,k+1}&=h^{r_i}_{k+1}Z^{\bar{f}_i\eta_{k+1}}_2R_{i,k+1},\\
       &\vdots\\
       K_{i,L}&=h^{r_i}_{L}Z^{\bar{f}_i\eta_L}_2R_{i,L}.\nonumber
       \end{aligned}
       \end{equation}
By assuming $X_2=g^{c_2}_2, Z_2=g^{d_2}_2$, this implicitly sets $\vec{\vartheta}=c_2\vec{\phi}, f_i=d_2\bar{f}_i.$
We also note that the values being shared in $\mathbb{G}_{2}$ are properly randomized by $f_i$. Therefore, this key is identically distributed as the
semi-functional key in $\mathcal{A}$'s view.

\medskip
\noindent\textbf{Challenge}: When $\mathcal{B}$ is given two equal-length messages $M_0$ and $M_1$ and a set $S^*$ of attribute vectors, $\mathcal{B}$ flips a
random coin $b\in\{0,1\}$ and chooses $t_{i^*}\in\mathbb{Z}_N$ for all ${i^*}=1,\cdots, |S^*|$. Then it sets the ciphertext to be:
$$C=M_b T, ~E=g^sY_2, $$
$$C_{i^*,0}=(g^sY_2)^{\bar{\zeta}_x}(g^sY_2)^{t_{i^*}(\bar{\eta}_1u_{1}+\cdots+\bar{\eta}_ku_{k})}, $$
$$C_{i^*,1}=(g^sY_2)^{t_{i^*}}.$$
Assuming $Y_2=g^c_2$, this implicitly sets $$\varphi_{i^*}=c(\bar{\zeta}_x+t_{i^*}\sum^k_{j=1}u_{j}\bar{\eta}_j)$$ and  $\upsilon_{i^*}=ct_{i^*},$
but again there is neither correlation between $(\varphi_{i^*} \mod  p_2)$ and $(\bar{\zeta}_x, \bar{\eta}_j \mod  p_2)$, nor correlation between
$(t_{i^*} \mod p_2)$ and $(\upsilon_{i^*} \mod  p_2)$ by the Chinese Remainder Theorem.

If $T=e(g,g)^{\alpha}$, then this ciphertext is the semi-functional ciphertext of message $M_b$. If $T$ is a random element in $\mathbb{G}_T$, this
ciphertext is a semi-functional encryption of a random message.

\medskip
\noindent\textbf{Guess}: If $T=e(g,g)^{\alpha}$, we are in $Game_{q}$. If $T$ is a random element in $\mathbb{G}_T$, we are in $Game_{final}$.
$\mathcal{B}$ outputs 0 when $\mathcal{A}$ outputs $b'=b$. Given the tuple $(g, g^\alpha X_2, X_3, g^sY_2, Z_2, T)$, the advantage of
$\mathcal{B}$ in breaking Assumption 3 is:
$$|\mbox{Pr}[\mathcal{B}(g, g^\alpha X_2, X_3, g^sY_2, Z_2, T=e(g,g)^\alpha)=0]-$$
$$\mbox{Pr}[\mathcal{B}(g, g^\alpha X_2, X_3, g^sY_2, Z_2,
T\overset{R}\longleftarrow \mathbb{G}_T)=0]|$$
$$=|Game_{q}\mbox{Adv}_{\mathcal{A}}-
Game_{final}\mbox{Adv}_{\mathcal{A}}|=\epsilon,$$
where $Game_{q}\mbox{Adv}_{\mathcal{A}}$ is the advantage of $\mathcal{A}$ in $Game_q$ and $Game_{final}\mbox{Adv}_{\mathcal{A}}$ is the advantage of
$\mathcal{A}$ in $Game_{final}$.
\qed
\end{proof}
%JOSEP2. changed sentence below.
From all the lemmas proven above, the proof of Theorem 1 follows:
\begin{proof}
In $Game_{final}$, the ciphertext completely hides the bit $b$, so the advantage of $\mathcal{A}$ in this game is negligible. 
%JOSEP2. Changed.
% Through the previous lemma, 
Through Lemmas~\ref{lem1}, \ref{lem2}, \ref{lem3} and~\ref{lem4},
we have shown that the real security game $Game_{real}$ is indistinguishable from $Game_{final}$. Therefore, the advantage of
$\mathcal{A}$ in $Game_{real}$ is negligible. Hence, there is no polynomial-time adversary with a non-negligible advantage in breaking our
APR-ABE system. This completes the proof of Theorem 1. \qed
\end{proof}

\section{Conclusion}
\label{conclusion}
We revisited KP-ABE and proposed a dynamic ABE  referred to as APR-ABE. APR-ABE distinguishes itself from other KP-ABE schemes by providing a delegation mechanism that allows a user to redefine the access policy and delegate a secret key without making the redefined access policy more restrictive. This feature renders APR-ABE especially suitable to e-healthcare record systems where {\em a priori} specification
%JOSEP. Slightly rewritten below.
of access policies for secret keys is too rigid or simply not practical.
We constructed an APR-ABE scheme with short ciphertexts and proved its full security in the standard model under several non-interactive assumptions.

\section*{Acknowledgements and disclaimer}
We thank the anonymous reviewers for their valuable suggestions.
%JOSEP2. Added acks and rewritten in a more compact way. 
The following funding sources are gratefully acknowledged:
Natural Science Foundation of China (projects 61370190, 61173154, 61272501, 61402029, 61202465 and 61472429), China National Key Basic Research Program (973 program, project 2012CB315905), Beijing Natural Science Foundation 
(project 4132056), Fundamental Research Funds for the Central Universities of 
China, Research Funds of Renmin University (No. 14XNLF02), European 
Commission (projects FP7 ``DwB'', FP7 ``Inter-Trust'' and H2020 ``CLARUS''), 
Spanish Govt. (project TIN2011-27076-C03-01), Govt. of Catalonia 
(ICREA Acad\`emia Prize to the fourth author). The fourth author
leads the UNESCO Chair in Data Privacy, but the views in this paper
do not commit UNESCO.  

%\paragraph{Paragraph headings} Use paragraph headings as needed.
%\begin{equation}
%a^2+b^2=c^2
%\end{equation}

% For one-column wide figures use
\begin{figure}
% Use the relevant command to insert your figure file.
% For example, with the graphicx package use
  %\includegraphics{example.eps}
% figure caption is below the figure
%\caption{Please write your figure caption here}
\label{fig:1}       % Give a unique label
\end{figure}
%
% For two-column wide figures use
\begin{figure*}
% Use the relevant command to insert your figure file.
% For example, with the graphicx package use
%  \includegraphics[width=0.75\textwidth]{example.eps}
% figure caption is below the figure
%\caption{Please write your figure caption here}
\label{fig:2}       % Give a unique label
\end{figure*}
%
% For tables use
%\begin{table}
%% table caption is above the table
%\caption{Please write your table caption here}
%\label{tab:1}       % Give a unique label
%% For LaTeX tables use
%\begin{tabular}{lll}
%\hline\noalign{\smallskip}
%first & second & third  \\
%\noalign{\smallskip}\hline\noalign{\smallskip}
%number & number & number \\
%number & number & number \\
%\noalign{\smallskip}\hline
%\end{tabular}
%\end{table}

%JOSEP. I suppose we'll be able to include acknowledgments
%at a later stage, right?

%\begin{acknowledgements}
%If you'd like to thank anyone, place your comments here
%and remove the percent signs.
%\end{acknowledgements}

% BibTeX users please use one of
%\bibliographystyle{spbasic}      % basic style, author-year citations
%\bibliographystyle{spmpsci}      % mathematics and physical sciences
%\bibliographystyle{spphys}       % APS-like style for physics
%\bibliography{}   % name your BibTeX data base

% Non-BibTeX users please use

\end{document}